\numberwithin{equation}{section}
\newtheorem{theorem}{Theorem}[section]
\newtheorem{lemma}[theorem]{Lemma}
\newtheorem{proposition}[theorem]{Proposition}
\newtheorem{corollary}[theorem]{Corollary}
\newtheorem{remark}[theorem]{Remark}
\newtheorem{definition}[theorem]{Definition}
\newcommand{\Var}{{\rm Var}}
\newcommand{\Go}{\Rightarrow}
\newcommand{\clz}{\hat{Z}}
\newcommand{\IG}{{\rm IG}}
\newcommand{\NIG}{{\rm NIG}}
\title{\bf A Stochastic Model of Order Book Dynamics using Bouncing Geometric Brownian Motions}
\author[1]{Xin Liu\thanks{Email: xliu9@clemson.edu.}}
\author[2]{Vidyadhar G. Kulkarni\thanks{Email: vkulkarn@email.unc.edu.}}
\author[2]{Qi Gong\thanks{Email: qgong@email.unc.edu.}}
\affil[1]{Department of Mathematical Sciences, Clemson University, Clemson, SC 29634.}
\affil[2]{Department of Statistics and Operations Research, University of North Carolina,
            Chapel Hill, NC 27599.}
\begin{document}
\maketitle

\begin{abstract}

We consider a limit order book, where buyers and sellers register to trade a security at specific prices. The largest price buyers on the book are willing to offer is called the market bid price, and the smallest price sellers on the book are willing to accept is called the market ask price. Market ask price is always greater than market bid price, and these prices move upwards and downwards due to new arrivals, market trades, and cancellations. We model these two price processes as ``bouncing geometric Brownian motions (GBMs)'', which are defined as exponentials of two mutually reflected Brownian motions. We then modify these bouncing GBMs to construct a discrete time stochastic process of trading times and trading prices, which is parameterized by a positive parameter $\delta$. Under this model, it is shown that the inter-trading times are inverse Gaussian distributed, and the logarithmic returns between consecutive trading times follow a normal inverse Gaussian distribution. Our main results show that the logarithmic trading price process is a renewal reward process, and under a suitable scaling, this process converges to a standard Brownian motion as $\delta\to 0$. We also prove that the modified ask and bid processes approach the original bouncing GBMs as $\delta\to0$. Finally, we derive a simple and effective prediction formula for trading prices, and illustrate the effectiveness of the prediction formula with an example using real stock price data.

\textbf{Keywords:} \emph{Order book dynamics; Geometric Brownian motions; Reflected Brownian motions; Mutually reflected Brownian motions; Inverse Gaussian distributions; Normal inverse Gaussian distributions; Renewal reward processes; Diffusion approximations; Scaling limits.}
\end{abstract}

\section{Introduction}

In a modern order-driven trading system, limit-sell and limit-buy orders arrive with specific prices and they are registered in a {\em limit order book (LOB)}. The price at which a buyer is willing to buy is called the bid price and the price at which a seller is willing to sell is called the ask price. The order book organizes the orders by their prices and by their arrival times within each price. The highest bid price on the book is called the market bid price, and the lowest ask price on the book is called the market ask price. In contrast to the limit orders, market orders have no prices: a market buy order is matched and settled against the sell order at the market ask price and a market sell order is matched and settled against the buy order at the market bid price. (We are ignoring the sizes of the orders in this simplified discussion.) When the market bid price equals the market ask price, a trade occurs, and the two matched traders are removed from the LOB. Thus immediately after the trade the market bid price decreases and the market ask price increases. Clearly the market ask price is always above the market bid price. Between two trading times, the market ask and bid prices fluctuate due to new arrivals, cancellations, market trades, etc.

There is an extensive literature on models of LOBs, including statistical analysis and stochastic modeling. In particular, Markov models have been developed in \cite{aj13, bhs08, cst10, cl10, cl13, hp15a, hp15b, k03}, to name a few. In such models, point processes are used to model arrival processes of limit and market orders, and the market bid and ask prices are formulated as complex jump processes. To simplify such complexity, one tries to develop suitable approximate models. Brownian motion type approximations are established, for example, in \cite{aj13, bhs08, cl10}, and law of large numbers is recently studied in \cite{hp15a, hp15b}.

It is clear that the stochastic evolution of the market ask and bid prices is a result of a complex dynamics of the trader behavior and the market mechanism. It makes sense to ignore the detailed dynamics altogether and directly model the market ask and bid prices as stochastic processes. We let $A(t)$ and $B(t)$ be the market ask and bid prices at time $t$, respectively, and model $\{A(t), t \ge 0\}$ and $\{B(t), t \ge 0\}$ as two stochastic processes with continuous sample paths that bounce off of each other as follows. Initially $A(0) > B(0)$. Intuitively, we assume that the market bid and ask prices evolve according to two independent geometric Brownian motions (GBMs), and bounce off away from each other whenever they meet. Hence we call this the ``bouncing GBMs" model of the LOB. To the best of our knowledge, this is the first time such a model is used to describe the dynamics of the market ask and bid price processes in the LOB.

%

Bouncing GBMs can be constructed from bouncing BMs (the detailed construction is given Section \ref{sec:MF}). Bouncing BMs have been studied by Burdzy and Nualart in \cite{bn01}, and a related model of bouncing Brownian balls has been studied by Saisho and Tanaka in \cite{st85}. Of these two papers, the one by Burdzy and Nuaalart is most relevant to our model. They study two Brownian motions in which  the lower one is reflected downward from the upper one. Thus the upper process is unperturbed by the lower process, while the lower process is pushed downward (by an appropriate reflection map) when it hits the upper process. We use a similar construction in our bouncing GBM model, except that in our case both processes reflect off of each other in opposite directions whenever they meet. We assume that the reflection is symmetric, which will be made precise in Section \ref{sec:MF}.

We would like to say that a transaction occurs when the market ask price process meets the market bid price process, and the transaction price is the level where they meet. Unfortunately, the bouncing GBMs will meet at uncountably many times in any finite interval of time. This will create uncountably many transactions over a finite interval of time, which is not a good model of the reality. In reality, transactions occur at discrete times. Denote by $T_n$ the $n$-th transaction time, and $P_n$ the price at which the $n$-th transaction is settled. We are interested in studying the discrete time process $\{(P_n,T_n), n \ge 1\}$. To define this correctly and conveniently, we assume a price separation parameter $\delta > 0$, and construct two modified market ask and bid price processes $A_\delta$ and $B_\delta$ from the bouncing GBMs $A$ and $B$. One can think of $\delta$ as representing the tick size of the LOB, typically one cent. The construction of $A_\delta$ and $B_\delta$ enables us to define a discrete time stochastic process $\{(P_{\delta,n}, T_{\delta,n}), n \ge 1\}$ of transaction prices and times. The precise definitions of $A_\delta, B_\delta$ and $(P_{\delta,n}, T_{\delta,n})$ are given in Section \ref{sec:ttp}.

We show that the inter-trading times $T_{\delta,n+1}-T_{\delta,n}$ follow an inverse Gaussian (IG) distribution, and the logarithmic return between consecutive trading times $\ln(P_{\delta,n+1}/P_{\delta,n})$ follow a normal inverse Gaussian (NIG) distribution. We then formulate the logarithmic trading price process as a renewal reward process in terms of inter-trading times and successive logarithmic returns. It is worth noting that $\delta$ is typically small, and in the numerical example in Section \ref{numerical}, $\delta = \mathcal{O}(10^{-3}).$ Finally, our main result shows that under a suitable scaling, the logarithmic trading price process converges to a standard Brownian motion as $ \delta \to 0$. We also study the limit of the modified market ask and bid price processes $(A_\delta, B_\delta)$ as $\delta\to0$, which is exactly the original bouncing GBMs $(A, B)$. Using these asymptotics, we derive a simple and effective prediction formulas for trading prices.

It is interesting to see that we get an asymptotic GBM model for the trading prices in the limit. The GBM model captures the intuition that the rates of returns over non-overlapping intervals are independent of each other, and has been extensively used to model stock prices since the breakthrough made by Black and Scholes \cite{bs73} and Merton \cite{m76}. Another interesting observation is the logarithmic returns between consecutive trading times are NIG distributed. In fact, empirical studies show that logarithmic returns of assets can be fitted very well by NIG distributions (see \cite{bn95, bn96, r97}) and Barndorff-Nielsen proposed NIG models in \cite{bn98}. Thus our model of bouncing GBMs provides another justification for the GBM model of trading prices.

The rest of the paper is organized as follows. In Section \ref{sec:MF}, we introduce our model of bouncing GBMs in details. In Section \ref{sec:ttp} we construct the modified market ask and bid processes and the price-transaction process.  All the main results about the distributions of transaction times and prices, and the limiting behaviors are summarized in Section \ref{results}. In Section \ref{par}, the estimators of the model parameters are derived using the method of moments. In Section
\ref{numerical}, we use asymptotic GBM model obtained in Section \ref{results} for trading prices, from which we derive a simple and effective forecasting formula. We also apply the formula to real data, and show that the estimated $\delta$ parameter is indeed very small, and hence the asymptotic results are applicable, and work very well over short time horizons. Finally, all proofs are given in Appendix.

\section{Market ask and bid prices} \label{sec:MF}

We consider a trading system, where buyers and sellers arrive with specific prices. Recall that the market bid price is the largest price at which buyers are willing to buy, and the market ask price is the smallest price at which sellers are willing to sell. The market ask price cannot be less than the market bid price, and a trade occurs when the market bid and ask prices are matched. We will model the market bid and ask prices as {\em bouncing GBMs}, which are defined as exponentials of mutually reflected Brownian motions (BMs). More precisely, let $A(t)$ and $B(t)$ denote the market ask and bid prices at time $t\ge 0$, and assume that $A(0)\ge B(0)$. For $t\ge 0$, define
\begin{align}
X_{a}(t) &= \ln A(0) + \mu_a t + \sigma_a W_{a}(t), \label{gbm1}\\
X_{b}(t) &=\ln B(0) +\mu_b t + \sigma_b W_{b}(t),\label{gbm2}
\end{align}
where $W_{a}, W_{b}$ are independent standard BMs independent of $A(0)$ and $B(0)$, and $\mu_a, \mu_b$ and $\sigma_a, \sigma_b$ are the drift and variance parameters. We assume that $\mu_a < \mu_b.$ We first define a pair of mutually reflected BMs $(Y_a, Y_b)$ as follows. For $t\ge 0$, define
\begin{align}
Y_a(t) &= X_a(t) + \frac{1}{2}L(t), \label{mrbm-1}\\
Y_b(t) &=  X_b(t) - \frac{1}{2}L(t),\label{mrbm-2}
\end{align}
where $\{L(t), t\ge 0\}$ is the unique continuous nondecreasing process such that
\begin{itemize}
\item[\rm (i)] $L(0) =0$;
\item[\rm (ii)] $Y_a(t) - Y_b(t) \ge 0$ for all $t\ge 0$;
\item[\rm (iii)]$L(t)$ can increase only when $Y_a(t) -Y_b(t) =0$, i.e.,
\[
\int_0^\infty 1_{\{Y_a(t) - Y_b(t) >0\}} dL(t) = 0.
\]
\end{itemize}
 The existence and uniqueness of $\{L(t), t\ge 0\}$ are from Skorohod lemma (see \cite[Lemma 3.6.14]{ks91}). In fact, $L(t)$ has the following explicit formula
 \begin{align}\label{pushing}
 L(t) = \sup_{0\le s \le t} \left( X_a(s) - X_b(s)\right)^-,
 \end{align}
 where for $a\in\mathbb{R}, a^- = \max\{-a, 0\}.$ Rougly speaking, the processes $Y_a(t)$ and $Y_b(t)$ behave like two independent BMs when $Y_a(t) > Y_b(t)$, and whenever they meet, the process $Y_a(t)$ will be pushed up, while $Y_b(t)$ will be pushed down, to make $Y_a(t) \ge Y_b(t)$ for all $t\ge 0$. Here we assume the pushing effect for $Y_a(t)$ and $Y_b(t)$ are the same, and thus we have $\frac{1}{2}$ before the regulator process $L(t)$ in both \eqref{mrbm-1} and \eqref{mrbm-2}.

Finally, $A(t)$ and $B(t)$ are defined as
\begin{align}
A(t) & = e^{Y_a(t)}, \label{ask}\\
B(t) & = e^{Y_b(t)}. \label{bid}
\end{align}
Thus $A(t)$ and $B(t)$ behave like two independent GBMs when $A(t) > B(t)$, and whenever they become equal, they will be pushed away from each other such that $A(t)\ge B(t)$ for all $t\ge 0$.

One important quantity is the ratio $\frac{A(t)}{B(t)}$, which could reflect the ask-bid spread. We note that
\[
\frac{A(t)}{B(t)} = e^{Y_a(t) - Y_b(t)} = e^{X_a(t) - X_b(t) + L(t)}, \ t\ge 0.
\]
From \eqref{pushing}, $\{Y_a(t)-Y_b(t), t\ge 0\}$ is a reflected Brownian motion (RBM). It is well known that a RBM $\{R(t), t\ge 0\}$ with mean $\mu$ and variance $\sigma^2$ has the following transient cumulative distribution function (CDF) (see Section 1.8 in \cite{harrison85}). For $x, y \in [0,\infty),$
\begin{align}\label{rbm_disn}
\Pr(R(t) \le y| R(0) =x) = 1 - \Phi\left( \frac{-y+x+\mu t}{\sigma\sqrt{t}}\right) - e^{2\mu y/\sigma^2} \Phi\left( \frac{-y-x-\mu t}{\sigma\sqrt{t}}\right),
\end{align}
where $\Phi(\cdot)$ is the CDF of the standard normal distribution. Thus for $t\ge 0$, the ratio $\frac{A(t)}{B(t)}$ has the following CDF. Assuming $A(0)$ and $B(0)$ are deterministic constants, for $y\ge 1$,
\begin{align*}
 \Pr\left(\frac{A(t)}{B(t)} \le y\right) & = 1 - \Phi\left( \frac{-\ln[y]+\ln[A(0)/B(0)]+(\mu_a -\mu_b) t}{\sqrt{(\sigma^2_a+\sigma^2_b)t}}\right) \\
& \quad- y^{-\frac{2(\mu_b-\mu_a)}{\sigma_a^2+\sigma^2_b}}  \Phi\left( \frac{-\ln[y]-\ln[A(0)/B(0)]-(\mu_a-\mu_b) t}{\sqrt{(\sigma_a^2+\sigma_b^2)t}}\right).
\end{align*}
Consequently, under the condition that $\mu_a < \mu_b$, the stationary distribution of $\frac{A}{B}$ is power-law distributed with density function
\begin{align}\label{power-law}
\frac{2(\mu_b-\mu_a)}{\sigma_a^2+\sigma^2_b} y^{-1-\frac{2(\mu_b-\mu_a)}{\sigma_a^2+\sigma^2_b}}, \ \ y\ge 1.
\end{align}
It is interesting to see that only stationary moments of order less than $\frac{2(\mu_b-\mu_a)}{\sigma_a^2+\sigma^2_b}$ are finite.
For finite $t$, a simple description of the $k$-th moment of $\frac{A(t)}{B(t)}$ with $A(0)=B(0)$ is presented in the following lemma, the proof of which is provided in Appendix.
\begin{lemma}\label{ratiodisn} Assume $A(0)=B(0)$. Then for $k \in \mathbb{N}$,
\begin{align}\label{ratio-moment}
E\left[\left(\frac{A(t)}{B(t)}\right)^k\right] = 1 + \frac{k(\sigma_a^2+\sigma_b^2)}{\mu_b-\mu_a} \int_0^\infty \exp\left\{\frac{k(\sigma^2_a + \sigma^2_b)x}{\mu_b-\mu_a} - 2x \right \}F(t; x, 0) dx,
\end{align}
where \[
F(t; x, 0) = \Phi\left(\frac{t-x}{\sqrt{t}}\right) + e^{2x} \Phi\left(\frac{-t-x}{\sqrt{t}}\right), \ x\ge 0,
\]
is the CDF of the first-passage-time of $\{Y_a(t)-Y_b(t), t\ge 0\}$ from $x$ to $0$.
\end{lemma}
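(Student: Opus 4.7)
The plan is to compute $E[(A(t)/B(t))^k] = E[\exp(kR(t))]$, where $R(t):=Y_a(t)-Y_b(t)$ is by construction a reflected Brownian motion starting at $R(0)=0$ (since $A(0)=B(0)$) with drift $\mu_a-\mu_b<0$ and variance $\sigma_a^2+\sigma_b^2$. I would open with the tail-integral identity
\[
E[e^{kR(t)}] \;=\; 1 \,+\, k\int_0^\infty e^{ky}\,\Pr(R(t)>y)\,dy,
\]
which comes from $E[f(X)] = f(0)+\int_0^\infty f'(y)\Pr(X>y)\,dy$ applied to $f(x)=e^{kx}$ and $X = R(t)\ge 0$.

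The second step is to substitute the explicit transient tail probability from \eqref{rbm_disn}, specialized to $R(0)=0$, $\mu=\mu_a-\mu_b$, $\sigma^2=\sigma_a^2+\sigma_b^2$, and then perform the change of variables $y = (\sigma_a^2+\sigma_b^2)\,x/(\mu_b-\mu_a)$. This substitution is rigged so that three things happen simultaneously: the exponential factor $\exp\{2(\mu_a-\mu_b)y/(\sigma_a^2+\sigma_b^2)\}$ collapses to $e^{-2x}$; the Jacobian $dy$ produces exactly the prefactor $(\sigma_a^2+\sigma_b^2)/(\mu_b-\mu_a)$ appearing in the claim; and the factor $e^{ky}$ becomes $\exp\{k(\sigma_a^2+\sigma_b^2)x/(\mu_b-\mu_a)\}$. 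After simplifying the arguments of the two $\Phi$ factors, the tail collapses into $e^{-2x}F(t;x,0)$, and plugging in gives precisely the integrand of the lemma.

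Finally, the identification of $F(t;x,0)$ as the FPT CDF of $Y_a-Y_b$ from $x$ to $0$ is the classical inverse-Gaussian first-passage formula for drifted Brownian motion: prior to hitting $0$, the local-time term $L$ contributes nothing, so the RBM coincides with the underlying drifted BM $X_a-X_b$ shifted up by $x$, and the reflection principle delivers the stated Gaussian-plus-exponential form. The main obstacle is purely bookkeeping in the second step: tracking the signs of $\mu_a-\mu_b$ versus $\mu_b-\mu_a$ through the $\Phi$-arguments, and checking that the two Gaussian terms consolidate cleanly into the compact form $F(t;x,0)$. No deeper analytic difficulty appears.
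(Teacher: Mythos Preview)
Your approach is correct and takes a genuinely different, more elementary route than the paper. The paper expands $e^{k(Y_a(t)-Y_b(t))}$ as a Taylor series, invokes Abate--Whitt's moment identity (Theorem~1.3 of \cite{AW87}) for each $E[(Y_a(t)-Y_b(t))^j]$, substitutes the exponential stationary moments and the gamma density $g_j$, and resums via Fubini; this needs an external reference and two interchanges of sum and integral. You instead apply the tail-integral identity $E[e^{kR}]=1+k\int_0^\infty e^{ky}\Pr(R>y)\,dy$ directly to the explicit transient RBM tail already quoted in \eqref{rbm_disn}, followed by a single change of variables. Your route is shorter and entirely self-contained within the paper.

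One caveat on the bookkeeping you flag as ``purely bookkeeping'': after the substitution $y=(\sigma_a^2+\sigma_b^2)x/(\mu_b-\mu_a)$, the two $\Phi$-arguments become
\[
\frac{-(\sigma_a^2+\sigma_b^2)x/(\mu_b-\mu_a)\mp(\mu_b-\mu_a)t}{\sqrt{(\sigma_a^2+\sigma_b^2)t}},
\]
which reduce to $(\pm\tau-x)/\sqrt{\tau}$ only with $\tau=(\mu_b-\mu_a)^2 t/(\sigma_a^2+\sigma_b^2)$, not with $\tau=t$. The explicit formula for $F(t;x,0)$ in the lemma is the first-passage CDF of a \emph{standardized} RBM (drift $-1$, variance $1$); its identification with the FPT of $Y_a-Y_b$ implicitly carries this time rescaling, which the paper inherits silently from \cite{AW87}. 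So your claim that the terms ``consolidate cleanly into the compact form $F(t;x,0)$'' is not automatic and needs this normalization spelled out---but that is a wrinkle in the lemma's statement rather than a flaw in your method, and it affects both proofs equally.
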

Note that $\lim_{t\to\infty} F(t; x, 0) = 1$, and so $\lim_{t\to\infty} E[(\frac{A(t)}{B(t)})^k]$ is finite only when $k < \frac{2(\mu_b-\mu_a)}{\sigma_a^2+\sigma^2_b}$. This result is consistent with the moments of the power law distribution in \eqref{power-law}, and indeed one can easily check that when $k < \frac{2(\mu_b-\mu_a)}{\sigma_a^2+\sigma^2_b}$,
\begin{align}\label{interchange}
\lim_{t\to\infty}E\left[\left(\frac{A(t)}{B(t)}\right)^k\right] = E\left[\frac{A(\infty)}{B(\infty)}\right],
\end{align}
where $\frac{A(\infty)}{B(\infty)}$ is a random variable with density function \eqref{power-law}.
Other performance analysis can be done by computing the joint distribution of $(A(t), B(t))$. However, it is nontrivial to obtain a simple description of the transient behavior of $(A(t), B(t)).$ Thus we would like to investigate such problems in a separate paper.

 \section{Trading times and prices} \label{sec:ttp}
Assuming that $A(0)>B(0)$, the first trading time is defined to be the first time that the market ask and bid prices become equal, and we would like to define the $n$th trading time to be the $n$th time the two prices become equal. However, the zero set $\{t\ge 0: A(t) - B(t) =0\}$ is uncountably infinite, and we cannot define the $n$th trading time as conveniently as the first one. Also note that in practice every time the market ask and bid prices become equal, they will separate from each other by at least one cent. Thus we consider the following modified market ask and bid price processes $A_\delta$ and $B_{\delta}$, where the positive constant $\delta$ represents the tick size.  We then use $A_\delta$ and $B_{\delta}$ to define the trading times and trading prices. More precisely, let $\delta$ be a strictly positive constant, and recall that $A(0)$ and $B(0)$ are the initial values of the market ask and bid price processes $A$ and $B$, and $X_a$ and $X_b$ are two independent BMs defined in \eqref{gbm1} and \eqref{gbm2}. For $n\ge 1$, define the following stopping times: $T_{\delta,0} = 0$, and
\begin{align}\label{tradingtimes}
T_{\delta,n}  = \inf\left\{t\ge 0:  X_a(t) - X_b(t) = {-2(n-1)\delta}\right\}.
\end{align}
Then $T_{\delta,n}\ge T_{\delta,n-1}$ and $T_{\delta,n}\to \infty$ almost surely as $n\to\infty$. We next define the modified market ask and bid price processes. For $t\ge 0,$
\begin{align}
A_\delta(t) & = \exp\left\{ X_a(t) + \sum_{n=1}^\infty (n-1)\delta 1_{\{t\in [T_{\delta,n-1}, T_{\delta,n})\}}\right\}, \\
B_\delta(t) & = \exp\left\{ X_b(t) - \sum_{n=1}^\infty (n-1)\delta 1_{\{t\in [T_{\delta,n-1}, T_{\delta,n})\}}\right\}.
\end{align}
Thus the first trade occurs at $T_{\delta,1}$, which is the first time the modified market ask and bid prices become equal, and the first trading price is defined as
\begin{align*}
P_{\delta,1} = A_\delta(T_{\delta,1}-)=B_\delta(T_{\delta,1}-) = e^{X_a(T_{\delta,1})} = e^{X_b(T_{\delta,1})}.
\end{align*}
(Note that $T_{\delta, 1}$ and $P_{\delta,1}$ don't depend on $\delta$ if the initials $A(0)$ and $B(0)$ are independent of $\delta$.) Right after the first trade occurs, the market ask and bid prices will separate in the following way.
\begin{align*}
A_\delta(T_{\delta,1}) = P_{\delta,1} e^{\delta} > P_{\delta,1}, \ \ B_\delta(T_{\delta,1}) = P_{\delta,1} e^{-\delta} < P_{\delta,1}.
\end{align*}
Starting from $T_{\delta, 1}$, the processes $A_\delta$ and $B_\delta$ evolves as two independent GMB's with initial values $P_{\delta,1} e^{\delta}$ and $P_{\delta,1} e^{-\delta}$ until they meet again at $T_{\delta, 2}$.
Recursively, for $n\ge 1$, the stopping time $T_{\delta,n}$ will be the $n$th meeting time of $A_\delta$ and $B_\delta$, and the $n$th trading price is defined as
\begin{align}
P_{\delta,n} = A_\delta(T_{\delta,n}-)=B_\delta(T_{\delta,n}-),
\end{align}
and the modified  market ask and bid prices at $T_{\delta,n}$ move to
\begin{align}
A_\delta(T_{\delta,n}) = P_{\delta,n}e^\delta > P_{\delta,n}, \ \ B_\delta(T_{\delta,n}) = P_{\delta,n}e^{-\delta} < P_{\delta,n}.
 \end{align}
Right after $T_{\delta,n}$, the processes $A_\delta$ and $B_\delta$ evolve as two independent GBMs with initials $P_{\delta,n}e^\delta$ and $P_{\delta,n}e^{-\delta}$ until they meet again at $T_{\delta, n+1}.$
 The dynamics of the market ask and bid prices is shown in Figure \ref{pricemov}.
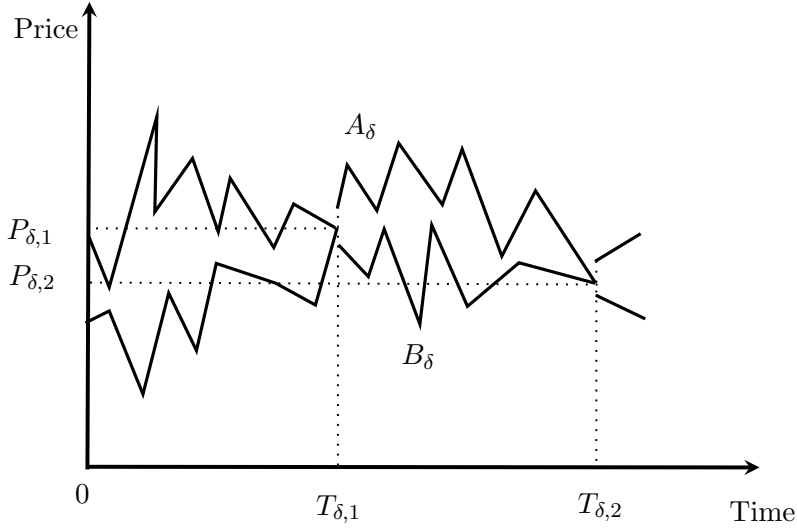
\begin{figure}
\centering
\ifx\du\undefined
  \newlength{\du}
\fi
\setlength{\du}{15\unitlength}
\begin{tikzpicture}
\pgftransformxscale{1.000000}
\pgftransformyscale{-1.000000}
\definecolor{dialinecolor}{rgb}{0.000000, 0.000000, 0.000000}
\pgfsetstrokecolor{dialinecolor}
\definecolor{dialinecolor}{rgb}{1.000000, 1.000000, 1.000000}
\pgfsetfillcolor{dialinecolor}
\pgfsetlinewidth{0.100000\du}
\pgfsetdash{}{0pt}
\pgfsetdash{}{0pt}
\pgfsetbuttcap
{
\definecolor{dialinecolor}{rgb}{0.000000, 0.000000, 0.000000}
\pgfsetfillcolor{dialinecolor}
\pgfsetarrowsend{stealth}
\definecolor{dialinecolor}{rgb}{0.000000, 0.000000, 0.000000}
\pgfsetstrokecolor{dialinecolor}
\draw (18.933333\du,15.232510\du)--(35.900000\du,15.249177\du);
}
\pgfsetlinewidth{0.100000\du}
\pgfsetdash{}{0pt}
\pgfsetdash{}{0pt}
\pgfsetbuttcap
{
\definecolor{dialinecolor}{rgb}{0.000000, 0.000000, 0.000000}
\pgfsetfillcolor{dialinecolor}
\pgfsetarrowsend{stealth}
\definecolor{dialinecolor}{rgb}{0.000000, 0.000000, 0.000000}
\pgfsetstrokecolor{dialinecolor}
\draw (18.950000\du,15.300000\du)--(19.000000\du,3.500000\du);
}
\definecolor{dialinecolor}{rgb}{0.000000, 0.000000, 0.000000}
\pgfsetstrokecolor{dialinecolor}
\node[anchor=west] at (18.350000\du,15.900000\du){0};
\definecolor{dialinecolor}{rgb}{0.000000, 0.000000, 0.000000}
\pgfsetstrokecolor{dialinecolor}
\node[anchor=west] at (34.850000\du,16.350000\du){Time};
\definecolor{dialinecolor}{rgb}{0.000000, 0.000000, 0.000000}
\pgfsetstrokecolor{dialinecolor}
\node[anchor=west] at (16.800000\du,4.150000\du){Price};
\pgfsetlinewidth{0.080000\du}
\pgfsetdash{}{0pt}
\pgfsetdash{}{0pt}
\pgfsetmiterjoin
\pgfsetbuttcap
{
\definecolor{dialinecolor}{rgb}{0.000000, 0.000000, 0.000000}
\pgfsetfillcolor{dialinecolor}
{\pgfsetcornersarced{\pgfpoint{0.000000\du}{0.000000\du}}\definecolor{dialinecolor}{rgb}{0.000000, 0.000000, 0.000000}
\pgfsetstrokecolor{dialinecolor}
\draw (18.975000\du,9.400000\du)--(19.500000\du,10.700000\du)--(20.700000\du,6.400000\du)--(20.650000\du,8.800000\du)--(21.600000\du,7.450000\du)--(22.250000\du,9.300000\du)--(22.550000\du,7.950000\du)--(23.650000\du,9.700000\du)--(24.150000\du,8.600000\du)--(25.200000\du,9.200000\du);
}}
\pgfsetlinewidth{0.080000\du}
\pgfsetdash{}{0pt}
\pgfsetdash{}{0pt}
\pgfsetmiterjoin
\pgfsetbuttcap
{
\definecolor{dialinecolor}{rgb}{0.000000, 0.000000, 0.000000}
\pgfsetfillcolor{dialinecolor}
{\pgfsetcornersarced{\pgfpoint{0.000000\du}{0.000000\du}}\definecolor{dialinecolor}{rgb}{0.000000, 0.000000, 0.000000}
\pgfsetstrokecolor{dialinecolor}
\draw (18.900000\du,11.600000\du)--(19.500000\du,11.300000\du)--(20.350000\du,13.400000\du)--(21.000000\du,10.850000\du)--(21.700000\du,12.300000\du)--(22.200000\du,10.100000\du)--(23.700000\du,10.600000\du)--(24.700000\du,11.150000\du)--(25.250000\du,9.200000\du);
}}
\pgfsetlinewidth{0.080000\du}
\pgfsetdash{}{0pt}
\pgfsetdash{}{0pt}
\pgfsetmiterjoin
\pgfsetbuttcap
{
\definecolor{dialinecolor}{rgb}{0.000000, 0.000000, 0.000000}
\pgfsetfillcolor{dialinecolor}
{\pgfsetcornersarced{\pgfpoint{0.000000\du}{0.000000\du}}\definecolor{dialinecolor}{rgb}{0.000000, 0.000000, 0.000000}
\pgfsetstrokecolor{dialinecolor}
\draw (25.250000\du,8.716667\du)--(25.500000\du,7.616667\du)--(26.250000\du,8.766667\du)--(26.800000\du,7.066667\du)--(27.900000\du,8.616667\du)--(28.400000\du,7.216667\du)--(29.400000\du,9.916667\du)--(30.250000\du,8.266667\du)--(31.766667\du,10.600000\du);
}}
\pgfsetlinewidth{0.080000\du}
\pgfsetdash{}{0pt}
\pgfsetdash{}{0pt}
\pgfsetmiterjoin
\pgfsetbuttcap
{
\definecolor{dialinecolor}{rgb}{0.000000, 0.000000, 0.000000}
\pgfsetfillcolor{dialinecolor}
{\pgfsetcornersarced{\pgfpoint{0.000000\du}{0.000000\du}}\definecolor{dialinecolor}{rgb}{0.000000, 0.000000, 0.000000}
\pgfsetstrokecolor{dialinecolor}
\draw (25.283333\du,9.633333\du)--(26.033333\du,10.433333\du)--(26.433333\du,9.233333\du)--(27.333333\du,11.633333\du)--(27.633333\du,9.133333\du)--(28.533333\du,11.183333\du)--(29.833333\du,10.083333\du)--(31.750000\du,10.600000\du);
}}
\pgfsetlinewidth{0.080000\du}
\pgfsetdash{}{0pt}
\pgfsetdash{}{0pt}
\pgfsetmiterjoin
\pgfsetbuttcap
{
\definecolor{dialinecolor}{rgb}{0.000000, 0.000000, 0.000000}
\pgfsetfillcolor{dialinecolor}
{\pgfsetcornersarced{\pgfpoint{0.000000\du}{0.000000\du}}\definecolor{dialinecolor}{rgb}{0.000000, 0.000000, 0.000000}
\pgfsetstrokecolor{dialinecolor}
\draw (31.750000\du,10.050000\du)--(32.900000\du,9.350000\du);
}}
\pgfsetlinewidth{0.080000\du}
\pgfsetdash{}{0pt}
\pgfsetdash{}{0pt}
\pgfsetmiterjoin
\pgfsetbuttcap
{
\definecolor{dialinecolor}{rgb}{0.000000, 0.000000, 0.000000}
\pgfsetfillcolor{dialinecolor}
{\pgfsetcornersarced{\pgfpoint{0.000000\du}{0.000000\du}}\definecolor{dialinecolor}{rgb}{0.000000, 0.000000, 0.000000}
\pgfsetstrokecolor{dialinecolor}
\draw (31.766667\du,10.900000\du)--(33.016667\du,11.500000\du);
}}
\pgfsetlinewidth{0.050000\du}
\pgfsetdash{{\pgflinewidth}{0.200000\du}}{0cm}
\pgfsetdash{{\pgflinewidth}{0.200000\du}}{0cm}
\pgfsetbuttcap
{
\definecolor{dialinecolor}{rgb}{0.000000, 0.000000, 0.000000}
\pgfsetfillcolor{dialinecolor}
\definecolor{dialinecolor}{rgb}{0.000000, 0.000000, 0.000000}
\pgfsetstrokecolor{dialinecolor}
\draw (25.266667\du,8.650000\du)--(25.266667\du,15.300000\du);
}
\pgfsetlinewidth{0.050000\du}
\pgfsetdash{{\pgflinewidth}{0.200000\du}}{0cm}
\pgfsetdash{{\pgflinewidth}{0.200000\du}}{0cm}
\pgfsetbuttcap
{
\definecolor{dialinecolor}{rgb}{0.000000, 0.000000, 0.000000}
\pgfsetfillcolor{dialinecolor}
\definecolor{dialinecolor}{rgb}{0.000000, 0.000000, 0.000000}
\pgfsetstrokecolor{dialinecolor}
\draw (31.783333\du,10.033333\du)--(31.783333\du,15.215843\du);
}
\pgfsetlinewidth{0.050000\du}
\pgfsetdash{{\pgflinewidth}{0.200000\du}}{0cm}
\pgfsetdash{{\pgflinewidth}{0.200000\du}}{0cm}
\pgfsetbuttcap
{
\definecolor{dialinecolor}{rgb}{0.000000, 0.000000, 0.000000}
\pgfsetfillcolor{dialinecolor}
\definecolor{dialinecolor}{rgb}{0.000000, 0.000000, 0.000000}
\pgfsetstrokecolor{dialinecolor}
\draw (25.216667\du,9.216667\du)--(18.966667\du,9.216667\du);
}
\pgfsetlinewidth{0.050000\du}
\pgfsetdash{{\pgflinewidth}{0.200000\du}}{0cm}
\pgfsetdash{{\pgflinewidth}{0.200000\du}}{0cm}
\pgfsetbuttcap
{
\definecolor{dialinecolor}{rgb}{0.000000, 0.000000, 0.000000}
\pgfsetfillcolor{dialinecolor}
\definecolor{dialinecolor}{rgb}{0.000000, 0.000000, 0.000000}
\pgfsetstrokecolor{dialinecolor}
\draw (31.816667\du,10.633333\du)--(18.983333\du,10.583333\du);
}
\definecolor{dialinecolor}{rgb}{0.000000, 0.000000, 0.000000}
\pgfsetstrokecolor{dialinecolor}
\node[anchor=west] at (24.450000\du,16.300000\du){$T_{\delta,1}$};
\definecolor{dialinecolor}{rgb}{0.000000, 0.000000, 0.000000}
\pgfsetstrokecolor{dialinecolor}
\node[anchor=west] at (31.050000\du,16.250000\du){$T_{\delta,2}$};
\definecolor{dialinecolor}{rgb}{0.000000, 0.000000, 0.000000}
\pgfsetstrokecolor{dialinecolor}
\node[anchor=west] at (17.250000\du,10.700000\du){};
\definecolor{dialinecolor}{rgb}{0.000000, 0.000000, 0.000000}
\pgfsetstrokecolor{dialinecolor}
\node[anchor=west] at (16.700000\du,10.500000\du){$P_{\delta,2}$};
\definecolor{dialinecolor}{rgb}{0.000000, 0.000000, 0.000000}
\pgfsetstrokecolor{dialinecolor}
\node[anchor=west] at (16.650000\du,9.350000\du){$P_{\delta,1}$};
\definecolor{dialinecolor}{rgb}{0.000000, 0.000000, 0.000000}
\pgfsetstrokecolor{dialinecolor}
\node[anchor=west] at (25.150000\du,6.650000\du){$A_\delta$};
\definecolor{dialinecolor}{rgb}{0.000000, 0.000000, 0.000000}
\pgfsetstrokecolor{dialinecolor}
\node[anchor=west] at (26.600000\du,12.450000\du){$B_\delta$};
\end{tikzpicture}
\caption{Dynamics of the modified market ask and bid prices $(A_\delta, B_\delta)$.}\label{pricemov}
\end{figure}

The relationship between the modified market ask and bid prices $(A_\delta, B_\delta)$ and the original market ask and bid prices $(A, B)$ is summarized in the following proposition. Its proof can be found in Appendix. In particular, it shows that $\{T_{\delta,n}\}_{n\in\mathbb{N}}$ are also the meeting times of the original price processes $A$ and $B$, and that $(A_\delta(t), B_\delta(t))$ converges to $(A(t), B(t))$ almost surely and uniformly on compact sets of $[0,\infty)$ as $\delta \to 0$.
\begin{proposition}\label{convergence-MRBM} \hfill
\begin{itemize}
\item[\rm (i)] For $\delta>0$ and $n\in\mathbb{N}$, 
\[
A(T_{\delta,n}) = B(T_{\delta,n}).
\]
\item[\rm (ii)]For $\delta>0$ and $t\ge 0,$ 
\[
A_\delta(t) \ge A(t), \ \mbox{and} \ B_\delta(t) \le B(t).
\]
\item[\rm (iii)] For $t\ge 0,$
\[
\sup_{0\le s \le t} \frac{A_\delta(t)}{A(t)} \to 1, \ \mbox{and} \ \sup_{0\le s \le t} \frac{B(t)}{B_\delta(t)}\to 1, \ \ \mbox{almost surely as $\delta \to 0.$}
\]
\end{itemize}
\end{proposition}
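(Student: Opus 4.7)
The proof plan hinges on the explicit Skorokhod formula $L(t) = \sup_{0\le s \le t}(X_a(s)-X_b(s))^-$ in \eqref{pushing}, which lets me read off the regulator $L$ exactly at each of the stopping times $T_{\delta,n}$. Note first that $T_{\delta,n}$ is, by its defining formula \eqref{tradingtimes}, the first hitting time of the continuous process $X_a - X_b$ at the negative level $-2(n-1)\delta$. Since $X_a(0) - X_b(0) = \ln(A(0)/B(0)) \ge 0$, continuity forces $X_a(s) - X_b(s) > -2(n-1)\delta$ for every $s < T_{\delta,n}$, with equality at $s = T_{\delta,n}$. Consequently $L(T_{\delta,n}) = 2(n-1)\delta$, so $Y_a(T_{\delta,n}) - Y_b(T_{\delta,n}) = -2(n-1)\delta + 2(n-1)\delta = 0$, and (i) follows by exponentiating via \eqref{ask}--\eqref{bid}.

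For (ii), I would fix $t\ge 0$, locate the unique $n$ with $t\in[T_{\delta,n}, T_{\delta,n+1})$, and observe that on this interval the step-sum in the definition of $A_\delta$ is constant and equals $n\delta$. The same first-hitting reasoning as in (i) gives $X_a(s)-X_b(s) > -2n\delta$ for every $s\le t < T_{\delta,n+1}$, so $L(t) < 2n\delta$. Subtracting the two log-representations,
\[
\ln A_\delta(t) - \ln A(t) \;=\; n\delta - \tfrac{1}{2}L(t) \;>\; 0,
\]
so $A_\delta(t) > A(t)$. The analogous computation with the opposite sign of the step-sum yields $B_\delta(t) < B(t)$.

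For (iii), monotonicity of $L$ provides the complementary lower bound $L(t) \ge L(T_{\delta,n}) = 2(n-1)\delta$ on the same interval $[T_{\delta,n}, T_{\delta,n+1})$, so the quantity $\ln A_\delta(t) - \ln A(t) = n\delta - \tfrac{1}{2}L(t)$ is squeezed in $(0,\delta]$ for every $t\ge 0$. This bound is pathwise and uniform in both $t$ and $\omega$, hence $1 \le A_\delta(s)/A(s) \le e^\delta$ for all $s\ge 0$, which gives $\sup_{0\le s\le t} A_\delta(s)/A(s) \le e^\delta \to 1$ as $\delta\to 0$. The convergence is in fact deterministic, so the almost-sure statement is trivial; the bid statement is obtained by reversing signs in the same computation.

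The only delicate point is keeping the strict-versus-weak inequality on the running infimum straight at the first-hitting time $T_{\delta,n}$, which is what makes $L(T_{\delta,n}) = 2(n-1)\delta$ exactly (rather than something larger). Once this bookkeeping is done, all three parts fall out of the Skorokhod representation and the interval structure of $\{T_{\delta,n}\}$, so I do not expect any substantive obstacle beyond this.
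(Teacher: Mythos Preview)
Your argument is correct and follows essentially the same route as the paper: compute $L(T_{\delta,n})=2(n-1)\delta$ from the Skorokhod formula \eqref{pushing}, bound $L(t)$ on each interval $[T_{\delta,n},T_{\delta,n+1})$ between $2(n-1)\delta$ and $2n\delta$, and conclude that $\ln A_\delta(t)-\ln A(t)\in[0,\delta]$ uniformly. The only cosmetic point is the edge case $n=0$ (i.e.\ $t\in[0,T_{\delta,1})$), where $L(t)=0$ and the step-sum is $0$, so your strict inequality in (ii) becomes equality there; this does not affect the statement $A_\delta(t)\ge A(t)$.
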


For convenience, we denote
\begin{align*}
& U_{\delta,n+1} = \ln (P_{\delta,n+1}/P_{\delta,n}),\;\;V_{\delta, n+1} = T_{\delta,n+1} - T_{\delta,n}, \ n\ge 1, \\
& U_{\delta,1} = \ln P_{\delta,1}, \ \ V_{\delta,1} = T_{\delta,1}.
\end{align*}
We are interested in the evolution of the trading prices. Define for $t\ge 0$,
\begin{align}
N_\delta(t) =\max\{n\geq 0: T_{\delta,n} \leq t\},
\end{align}
which gives the number of trades up to time $t$. Now the latest trading price can be formulated as
\begin{align}
P_\delta(t) = P_{\delta, N_\delta(t)}, \ \mbox{for} \ t\ge T_{\delta,1}.
\end{align}
For $t\ge T_{\delta,1}$, let $Z_\delta(t)= \ln(P_\delta(t))$, and so $Z_\delta(t) = \sum_{n=1}^{N_\delta(t)} U_{\delta,n}.$
When $0\le t < T_{\delta,1}$, we simply let $Z_\delta(t) =0$. Thus we have
\begin{align}
Z_\delta(t) = \sum_{n=1}^{N_\delta(t)} U_{\delta,n},
\end{align}
with the convention that $\sum_{n=1}^0 U_{\delta,n}= 0.$
We will see in Lemma \ref{rrp} that $\{Z_\delta(t), t\ge0\}$ is a renewal reward process. Our goal is to establish a scaling limit theorem for $Z$ as $\delta \to 0$, and develop an asymptotic model for real financial data.

\section{Main results}\label{results}
We present our main results in this section. In particular, it is shown that $(U_{\delta,n}, V_{\delta,n}), n\ge 2,$ are i.i.d. random variables (see Lemma \ref{distn-1}), and $U_{\delta,n}$ follows a NIG distribution and $V_{\delta,n}$ is IG distributed (see Corollary \ref{marginal}). Using these results, it is clear that $\{Z_\delta(t), t\ge 0\}$ is a renewal reward process, and the scaling limit theorem is established in Theorem \ref{th:conv}. All the proofs are provided in Appendix.

\subsection{Distribution of $(U_{\delta, n}, V_{\delta, n})$}

We derive the joint distribution of $(U_{\delta,n}, V_{\delta,n})$ for each $n\ge 1$ in the following lemma. Note that $(U_{\delta,1}, V_{\delta,1})$ doesn't depend on $\delta$ if the intial values $A(0)$ and $B(0)$ are independent of $\delta$.

\begin{lemma}\label{distn-1} \hfill
\begin{itemize}
\item[\rm (i)] Assume $A(0) = e^\alpha, B(0) = e^\beta$, and $\alpha > \beta$. For $t\ge 0$ and $x\in \mathbb{R},$ we have
\begin{align}
& \Pr(U_{\delta,1}\in dx, V_{\delta,1}\in dt) = \nonumber\\
& \!\!\! \frac{\alpha-\beta}{2\pi t^2 \sigma_a \sigma_b} \exp\left\{-\frac{\left[\frac{\sigma_b}{\sigma_a}(x-\alpha-\mu_at)+\frac{\sigma_a}{\sigma_b}(x-\beta-\mu_bt)\right]^2+[\alpha-\beta - (\mu_b -\mu_a)t]^2}{2(\sigma_a^2 + \sigma_b^2) t}\right\} dx dt. \label{distribution}
\end{align}
In particular, $V_{\delta,1}$ follows IG distribution with the following density function
\begin{align}
\Pr(V_{\delta,1}\in dt) =  \frac{\alpha-\beta}{\sqrt{2\pi(\sigma_a^2 + \sigma_b^2) t^3} } \exp\left\{-\frac{[\alpha-\beta - (\mu_b -\mu_a)t]^2}{2(\sigma_a^2 + \sigma_b^2) t}\right\} dt,\label{meeting-time}
\end{align}
and given $V_{\delta,1} = t$, $U_{\delta,1}$ follows normal distribution with mean $\frac{\sigma^2_b(\alpha+\mu_a t)+\sigma^2_a(\beta+\mu_b t)}{\sigma^2_a + \sigma^2_b}$ and variance $\frac{\sigma_a^2\sigma^2_b t}{\sigma^2_a + \sigma^2_b}$.
\item[\rm (ii)] The sequence $(U_{\delta,n}, V_{\delta,n})_{n\ge 2}$ is an i.i.d. sequence, which is independent of $(U_{\delta,1}, V_{\delta,1})$ and has the same distribution as in (i) with $\alpha = \delta$ and $\beta = -\delta.$
\end{itemize}
\end{lemma}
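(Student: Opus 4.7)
The plan is to reduce everything to a single one-dimensional Brownian motion, namely the gap process $D(t) := X_a(t) - X_b(t) = (\alpha - \beta) + (\mu_a - \mu_b) t + \sigma_a W_a(t) - \sigma_b W_b(t)$, which has drift $\mu_a - \mu_b < 0$ and variance $\sigma_a^2 + \sigma_b^2$. By definition $V_{\delta,1} = \inf\{t \ge 0 : D(t) = 0\}$ is its first-passage time from $\alpha-\beta > 0$ down to $0$, so the marginal IG density \eqref{meeting-time} is just the classical first-passage density for such a drifted BM.

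To recover the conditional law of $U_{\delta,1} = X_a(V_{\delta,1})$ given $V_{\delta,1}$, I would introduce the orthogonal decomposition
\[
\tilde W_1(t) = \frac{\sigma_a W_a(t) - \sigma_b W_b(t)}{\sqrt{\sigma_a^2 + \sigma_b^2}}, \qquad \tilde W_2(t) = \frac{\sigma_b W_a(t) + \sigma_a W_b(t)}{\sqrt{\sigma_a^2 + \sigma_b^2}},
\]
which are independent standard Brownian motions by a direct covariance check. Since $D$ depends only on $\tilde W_1$, the time $V_{\delta,1}$ is a stopping time of the filtration of $\tilde W_1$ alone, and the process $\tilde W_2$ is independent of it. Rewriting
\[
X_a(t) = \alpha + \mu_a t + \frac{\sigma_a^2}{\sqrt{\sigma_a^2+\sigma_b^2}}\,\tilde W_1(t) + \frac{\sigma_a\sigma_b}{\sqrt{\sigma_a^2+\sigma_b^2}}\,\tilde W_2(t),
\]
and using $D(V_{\delta,1}) = 0$ to solve for $\tilde W_1(V_{\delta,1})$ as a deterministic function of $V_{\delta,1}$, the only surviving randomness in $U_{\delta,1}$ given $V_{\delta,1} = t$ is the Gaussian term $(\sigma_a\sigma_b/\sqrt{\sigma_a^2+\sigma_b^2})\tilde W_2(t) \sim N(0, \sigma_a^2\sigma_b^2 t/(\sigma_a^2 + \sigma_b^2))$. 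Collecting the deterministic terms reproduces the conditional mean stated in (i), and multiplying the resulting conditional normal density by the IG marginal followed by a short regrouping of the two exponents yields the joint density \eqref{distribution}.

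Part (ii) is then a strong-Markov restart at $T_{\delta,n-1}$: the increments $W_a(T_{\delta,n-1}+\cdot) - W_a(T_{\delta,n-1})$ and $W_b(T_{\delta,n-1}+\cdot) - W_b(T_{\delta,n-1})$ are independent standard BMs independent of $\mathcal{F}_{T_{\delta,n-1}}$, and the construction in Section \ref{sec:ttp} resets $A_\delta(T_{\delta,n-1}) = P_{\delta,n-1} e^\delta$ and $B_\delta(T_{\delta,n-1}) = P_{\delta,n-1} e^{-\delta}$. Hence the post-$T_{\delta,n-1}$ evolution is in distribution a fresh copy of the model with log-initials $\ln P_{\delta,n-1} \pm \delta$. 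Applying (i) to this restart and observing that $U_{\delta,n} = \ln P_{\delta,n} - \ln P_{\delta,n-1}$ amounts to translating by $-\ln P_{\delta,n-1}$ (which in the density of (i) only shifts $\alpha,\beta$ by the same constant), one obtains the joint law of $(U_{\delta,n},V_{\delta,n})$ from (i) with $\alpha=\delta,\beta=-\delta$; since this law does not depend on $\mathcal{F}_{T_{\delta,n-1}}$, independence from the past and the i.i.d.\ conclusion follow by iteration. The step requiring care is the independence of $\tilde W_2$ from $V_{\delta,1}$, which hinges on $V_{\delta,1}$ being measurable with respect to the filtration of $\tilde W_1$ alone; once this separation of drivers is in place, the rest of (i) is algebra on Gaussian exponents and (ii) is clean strong-Markov bookkeeping.
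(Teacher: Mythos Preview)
Your proposal is correct and follows essentially the same approach as the paper: the paper packages the orthogonal decomposition as a rotation matrix $M$ (with angle $\theta=\arctan(\sigma_a/\sigma_b)$) applied to the scaled process, yielding independent Brownian motions $\check B_a,\check B_b$ that coincide, up to sign, with your $\tilde W_2,\tilde W_1$; the key independence step (``$\check B_a$ and $V_{\delta,1}$ are independent'') and the strong-Markov restart for part (ii) are exactly the ones you identify.
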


To derive the marginal distributions of $U_n, n\ge 1$, we introduce the following definitions of IG and NIG distributions (see \cite{s93}).

\begin{definition}
\begin{itemize}
\item[\rm (i)] An inverse  Gaussian (IG) distribution with parameters $a_1$ and $a_2$ has density function
\[
f(x; a_1, a_2) = \frac{a_1}{\sqrt{2\pi x^3}} \exp\left\{-\frac{(a_1 -a_2 x)^2}{2 x}\right\}, \ x>0,
\]
which is usually denoted by $\IG(a_1, a_2).$
\item[\rm (ii)] A random variable $X$ follows a normal inverse Gaussian (NIG) distribution with parameters $\bar\alpha, \bar\beta, \bar\mu, \bar\delta$ with notation $\NIG(\bar\alpha, \bar\beta, \bar\mu, \bar\delta)$ if
\[
 Y|X=x \sim N(\bar\mu + \bar\beta x, x), \ \mbox{and} \ X\sim \IG(\bar\delta, \sqrt{\bar\alpha^2 - \bar\beta^2}).
\]
The density function of $Y$ is given as
\[
f(y;\bar\alpha, \bar\beta, \bar\mu, \bar\delta) = \frac{\bar\alpha}{\pi\bar\delta} \exp\left\{\sqrt{\bar\alpha^2 - \bar\beta^2} + \frac{\bar\beta}{\bar\delta}(y -\bar\mu)\right\} \frac{K_1\left(\bar\alpha\sqrt{1 + (\frac{y-\bar\mu}{\bar\delta})^2}\right)}{\sqrt{1 + (\frac{y-\bar\mu}{\bar\delta})^2}},
\]
where $K_1(z) = \frac{1}{2}\int_0^\infty e^{-z(t+t^{-1})/2} dt$ is the modified Bessel function of the third kind with index $1$.
\end{itemize}
\end{definition}

Using the above definitions, we have the following conclusion on the marginal distributions of $(U_n, V_n), n\ge 1$.
\begin{corollary}\label{marginal}
\begin{itemize}
\item[\rm (i)] Assume $A(0) = e^\alpha$, $B(0) = e^\beta$, and $\alpha>\beta$. Then
\begin{align*}
V_{\delta,1} \sim \IG\left(\frac{\alpha-\beta}{\sqrt{\sigma^2_a +\sigma^2_b}}, \ \frac{\mu_b-\mu_a}{\sqrt{\sigma^2_a + \sigma^2_b}} \right),
\end{align*}
and
\begin{align*}
& U_{\delta,1}  \sim \NIG\left(\frac{\sqrt{(\sigma^2_a + \sigma^2_b)(\mu_a^2\sigma_b^2+\mu_b^2\sigma_a^2)}}{\sigma_a^2\sigma_b^2}, \ \frac{\mu_a\sigma_b^2+\mu_b\sigma_a^2}{\sigma_a^2\sigma_b^2}, \ \frac{\alpha\sigma_b^2 +\beta\sigma_a^2}{\sigma_a^2+\sigma_b^2}, \ \frac{(\alpha-\beta)\sigma_a\sigma_b}{\sigma_a^2+\sigma_b^2} \right). \label{nig-1}
\end{align*}
\item[\rm (ii)] For $n\ge 2$, $V_{\delta,n}$ and $U_{\delta,n}$ follow the same IG and NIG distributions as in (i) with $\alpha = \delta$ and $\beta = -\delta.$
\end{itemize}
\end{corollary}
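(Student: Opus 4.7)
The plan is to read off the two marginals directly from the decomposition already provided by Lemma~\ref{distn-1}(i), and then match parameters to the textbook $\IG$ and $\NIG$ densities. For $V_{\delta,1}$, the density in \eqref{meeting-time} already has the shape $\frac{a_1}{\sqrt{2\pi t^3}}\exp\{-(a_1-a_2t)^2/(2t)\}$ once I pull the factor $1/\sqrt{\sigma_a^2+\sigma_b^2}$ inside, with $a_1=(\alpha-\beta)/\sqrt{\sigma_a^2+\sigma_b^2}$ and $a_2=(\mu_b-\mu_a)/\sqrt{\sigma_a^2+\sigma_b^2}$. Verifying that the exponent $[\alpha-\beta-(\mu_b-\mu_a)t]^2/[(\sigma_a^2+\sigma_b^2)t]$ equals $(a_1-a_2t)^2/t$ is a routine substitution check, and this produces the claimed IG parameters.

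For $U_{\delta,1}$, I would exploit the conditioning structure built into Lemma~\ref{distn-1}(i): given $V_{\delta,1}=t$, the return $U_{\delta,1}$ is normal with mean $[\sigma_b^2(\alpha+\mu_a t)+\sigma_a^2(\beta+\mu_b t)]/(\sigma_a^2+\sigma_b^2)$ and variance $\sigma_a^2\sigma_b^2 t/(\sigma_a^2+\sigma_b^2)$. To match the definition of $\NIG$, I would rescale the mixing variable, setting $X:=\frac{\sigma_a^2\sigma_b^2}{\sigma_a^2+\sigma_b^2}V_{\delta,1}$, so that the conditional variance of $U_{\delta,1}$ given $X=x$ is exactly $x$. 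Splitting the conditional mean into its constant and linear-in-$t$ pieces then rewrites it as $\bar\mu+\bar\beta X$ with $\bar\mu=(\alpha\sigma_b^2+\beta\sigma_a^2)/(\sigma_a^2+\sigma_b^2)$ and $\bar\beta=(\mu_a\sigma_b^2+\mu_b\sigma_a^2)/(\sigma_a^2\sigma_b^2)$, which is the first half of the NIG definition.

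It then remains to confirm that $X\sim\IG(\bar\delta,\sqrt{\bar\alpha^2-\bar\beta^2})$. The key tool is the scaling rule $cV\sim\IG(a_1\sqrt{c},a_2/\sqrt{c})$ whenever $V\sim\IG(a_1,a_2)$, which follows from a one-line change of variables in the IG density. Plugging in $c=\sigma_a^2\sigma_b^2/(\sigma_a^2+\sigma_b^2)$ produces the first parameter $(\alpha-\beta)\sigma_a\sigma_b/(\sigma_a^2+\sigma_b^2)=\bar\delta$ and the second parameter $(\mu_b-\mu_a)/(\sigma_a\sigma_b)$. The main algebraic obstacle — the only nontrivial step — is verifying the identity
\[
\bar\alpha^2-\bar\beta^2=\frac{(\sigma_a^2+\sigma_b^2)(\mu_a^2\sigma_b^2+\mu_b^2\sigma_a^2)-(\mu_a\sigma_b^2+\mu_b\sigma_a^2)^2}{\sigma_a^4\sigma_b^4}=\frac{(\mu_b-\mu_a)^2}{\sigma_a^2\sigma_b^2},
\]
which after expanding both squares collapses via cancellation of the $\mu_a^2\sigma_b^4$ and $\mu_b^2\sigma_a^4$ terms to $\sigma_a^2\sigma_b^2(\mu_a-\mu_b)^2$, giving the required second IG parameter under the assumption $\mu_a<\mu_b$. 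Part (ii) is then immediate: by Lemma~\ref{distn-1}(ii) the law of $(U_{\delta,n},V_{\delta,n})$ for $n\ge 2$ is obtained from the $n=1$ case by substituting $\alpha=\delta$ and $\beta=-\delta$, so the same derivation carries over verbatim.
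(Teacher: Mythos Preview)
Your proposal is correct and is exactly the argument the paper has in mind: the corollary is stated without an explicit proof in the appendix because it is meant to follow immediately from Lemma~\ref{distn-1}(i) together with the $\IG$/$\NIG$ definitions, and you have carried out precisely that parameter-matching (including the rescaling of the mixing variable and the verification of $\bar\alpha^2-\bar\beta^2=(\mu_b-\mu_a)^2/(\sigma_a^2\sigma_b^2)$) in full detail.
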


Let $(U_\delta,V_\delta)$ be a generic random variable with the same joint distribution as $(U_{\delta,n},V_{\delta,n}), n\ge 2$. Next we find the moment generating function of $(U_\delta,V_\delta)$, which will be used in the proof of Theorem \ref{th:conv} and Section \ref{par}.

\begin{lemma}\label{thm2_moments}
There exists $h>0$ such that the moment generating function of $(U_\delta, V_\delta)$ exists for $|(s,t)|\le h$, and is given by
\begin{equation}\label{mgf}
\phi_\delta(s,t) = E\left[ {\exp \{   sU_\delta + tV_\delta\}} \right] = \exp \{ [2{\theta}(s,t)-s]\delta \},
\end{equation}
where $\theta(s,t)$ is defined as follows.
\begin{align}\label{theta1}
{\theta(s,t)} &= \frac{{({\mu _b} - {\mu _a} + s\sigma _b^2) - \sqrt {{{({\mu _b} - {\mu _a} + s\sigma_b^2)}^2} - (\sigma _a^2 + \sigma _b^2)({s^2}\sigma _b^2 + 2t + 2s{\mu _b})} }}{{\sigma _a^2 + \sigma _b^2}}.
\end{align}
In particular, the first two moments of $(U_\delta,V_\delta)$ are as given below:
\begin{eqnarray*}
&&E(V_\delta) = \frac{2\delta }{{{\mu _b} - {\mu _a}}},\;\;E({U_\delta} ) = \frac{{\delta ({\mu _b} + {\mu _a})}}{{{\mu
_b} - {\mu _a}}},\\
&&\mbox{Var}(V_\delta) = \frac{{2(\sigma _a^2 + \sigma _b^2)\delta }}{{{{({\mu _b} - {\mu _a})}^3}}},\;\;\mbox{Var}(U_\delta) = \frac{{2(\mu _b^2\sigma _a^2 + \mu _a^2\sigma _b^2)\delta }}{{{{({\mu _b} - {\mu _a})}^3}}}, \\
&&\mbox{Cov}(U_\delta,V_\delta) = \frac{{2({\mu _b}\sigma _a^2 + {\mu _a}\sigma _b^2)\delta }}{{{{({\mu _b} - {\mu _a})}^3}}}.
\end{eqnarray*}
Furthermore, for $k, l\in \mathbb{N}\cup\{0\}$ and $k+l\ge 1$, there exists some constant $c_0$ such that
\begin{equation}\label{moments}
\frac{E(U^k_\delta V^l_\delta)}{\delta} \to c_0, \ \ \mbox{as $\delta\to0$.}
\end{equation}
\end{lemma}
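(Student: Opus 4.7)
The plan is to compute $\phi_\delta$ by conditioning on $V_\delta$ and invoking the joint distribution of $(U_\delta,V_\delta)$ given in Lemma \ref{distn-1}(ii), namely that of $(U_{\delta,1},V_{\delta,1})$ with $\alpha=\delta$ and $\beta=-\delta$. Under this specification $V_\delta\sim\IG\bigl(2\delta/\sqrt{\sigma_a^2+\sigma_b^2},\,(\mu_b-\mu_a)/\sqrt{\sigma_a^2+\sigma_b^2}\bigr)$, and conditionally on $V_\delta=v$, $U_\delta$ is Gaussian with the mean and variance recorded in Lemma \ref{distn-1}(i). Applying the Gaussian MGF to the inner conditional expectation, I would write
\[
E\bigl[e^{sU_\delta}\mid V_\delta=v\bigr]=\exp\{c(s)\delta + \kappa(s)v\},
\]
where $c(s)=s(\sigma_b^2-\sigma_a^2)/(\sigma_a^2+\sigma_b^2)$ and $\kappa(s)=s(\sigma_b^2\mu_a+\sigma_a^2\mu_b)/(\sigma_a^2+\sigma_b^2)+s^2\sigma_a^2\sigma_b^2/[2(\sigma_a^2+\sigma_b^2)]$. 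Hence $\phi_\delta(s,t)=e^{c(s)\delta}\,E\bigl[e^{(t+\kappa(s))V_\delta}\bigr]$, and the existence of $\phi_\delta$ in a neighborhood of the origin reduces to that of the IG moment generating function, which holds whenever $2(t+\kappa(s))(\sigma_a^2+\sigma_b^2)<(\mu_b-\mu_a)^2$---an open condition containing $(0,0)$.

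Next I would invoke the standard IG MGF $E[e^{\lambda X}]=\exp\{a_1a_2-a_1\sqrt{a_2^2-2\lambda}\}$ at $\lambda=t+\kappa(s)$ and combine it with the $e^{c(s)\delta}$ factor. The main algebraic obstacle is the identity
\[
(\mu_b-\mu_a+s\sigma_b^2)^2-(\sigma_a^2+\sigma_b^2)(s^2\sigma_b^2+2t+2s\mu_b)=(\mu_b-\mu_a)^2-2(t+\kappa(s))(\sigma_a^2+\sigma_b^2),
\]
which matches the radicand produced by the IG MGF with the one appearing inside $\theta(s,t)$. Direct expansion reduces both sides to $(\mu_b-\mu_a)^2-2s(\sigma_b^2\mu_a+\sigma_a^2\mu_b)-s^2\sigma_a^2\sigma_b^2-2t(\sigma_a^2+\sigma_b^2)$. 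After this identification, gathering the remaining constant and linear-in-$s$ pieces in the exponent rearranges them precisely into $[2\theta(s,t)-s]\delta$, yielding \eqref{mgf}.

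The first and second moments then follow by differentiating $\phi_\delta=\exp\{\delta[2\theta-s]\}$ at $(s,t)=(0,0)$ and using $\theta(0,0)=0$. Implicit differentiation of $(\sigma_a^2+\sigma_b^2)\theta=(\mu_b-\mu_a+s\sigma_b^2)-\sqrt{D(s,t)}$, with $D(0,0)=(\mu_b-\mu_a)^2$, yields $\partial_t\theta(0,0)=1/(\mu_b-\mu_a)$ and $\partial_s\theta(0,0)=\mu_b/(\mu_b-\mu_a)$; a routine second round of differentiation produces $\partial_t^2\theta$, $\partial_s\partial_t\theta$, and $\partial_s^2\theta$ at the origin. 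Substituting into $\partial_s\phi_\delta|_0$ and $\partial_t\phi_\delta|_0$ recovers $E(U_\delta)$ and $E(V_\delta)$, and substituting into the second partials together with $\Var(X)=E(X^2)-E(X)^2$ and $\Cov(X,Y)=E(XY)-E(X)E(Y)$ reproduces the listed variances and covariance.

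Finally, for \eqref{moments}, write $\phi_\delta=\exp\{\delta g(s,t)\}$ with $g=2\theta-s$, which is real-analytic near the origin and satisfies $g(0,0)=0$. Expanding $\phi_\delta=\sum_{n\ge 0}\delta^n g^n/n!$ and applying the Leibniz rule to $\partial_s^k\partial_t^l(g^n)|_{(0,0)}$, any distribution of the $(k+l)$ derivatives among the $n$ copies of $g$ that leaves one copy undifferentiated contributes a factor of $g(0,0)=0$ and hence vanishes; the surviving terms require $n\le k+l$ and carry a prefactor of $\delta^n$. The $n=1$ term therefore dominates:
\[
\frac{\partial^{k+l}\phi_\delta}{\partial s^k\partial t^l}(0,0)=\delta\,\frac{\partial^{k+l}g}{\partial s^k\partial t^l}(0,0)+O(\delta^2),
\]
so dividing by $\delta$ and letting $\delta\to 0$ proves \eqref{moments} with $c_0=\partial^{k+l}g/\partial s^k\partial t^l|_{(0,0)}$, a constant depending on $(k,l)$ but independent of $\delta$.
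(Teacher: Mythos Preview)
Your proof is correct, and your treatment of the asymptotic statement \eqref{moments} is in fact a bit more careful than the paper's. However, the route you take to the MGF is genuinely different from the paper's.

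The paper does not condition on $V_\delta$ at all. Instead it introduces the exponential martingales
\[
Y_a(t)=\exp\Bigl\{\theta_1 X_a(t)-\bigl(\theta_1\mu_a+\tfrac12\theta_1^2\sigma_a^2\bigr)t\Bigr\},\qquad
Y_b(t)=\exp\Bigl\{\theta_2 X_b(t)-\bigl(\theta_2\mu_b+\tfrac12\theta_2^2\sigma_b^2\bigr)t\Bigr\},
\]
and applies the optional stopping theorem at the stopping time $V_{\delta,1}$, using $X_a(V_{\delta,1})=X_b(V_{\delta,1})=U_{\delta,1}$ and $X_a(0)=\delta$, $X_b(0)=-\delta$. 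This immediately gives
\[
E\bigl[\exp\{(\theta_1+\theta_2)U_\delta-(\theta_1\mu_a+\tfrac12\theta_1^2\sigma_a^2+\theta_2\mu_b+\tfrac12\theta_2^2\sigma_b^2)V_\delta\}\bigr]=\exp\{(\theta_1-\theta_2)\delta\},
\]
after which the change of variables $s=\theta_1+\theta_2$, $-t=\theta_1\mu_a+\tfrac12\theta_1^2\sigma_a^2+\theta_2\mu_b+\tfrac12\theta_2^2\sigma_b^2$ yields a quadratic for $\theta_1$; the correct branch of the square root is selected by specializing to $s=0$ and matching the known IG MGF of $V_\delta$.

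Your approach instead feeds the conditional Gaussian law of $U_\delta\mid V_\delta$ (already established in Lemma \ref{distn-1}) into the IG MGF, and then verifies the algebraic identity for the radicand. This is entirely legitimate and arguably more elementary, since it reuses existing distributional results rather than a fresh martingale argument; it also makes the domain of validity of $\phi_\delta$ transparent via the IG MGF condition. The paper's optional-stopping route, on the other hand, explains structurally why $\theta$ satisfies a quadratic and does not require the explicit joint density. The differentiation for the moments and your Leibniz argument for \eqref{moments} coincide in spirit with the paper's, with your version spelling out more precisely why only the $n=1$ term in the expansion of $e^{\delta g}$ survives at order $\delta$.
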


\subsection{Asymptotics of $\{Z_\delta(t), t\ge 0\}$}
In this section we study the behaviors of the $\{Z_\delta(t), t \ge 0\}$ process as either $t\to\infty$ or $\delta \rightarrow 0$. First from Corollary \ref{marginal}, it is clear that for each $\delta$, $\{Z_\delta(t), t\geq 0\}$ is a renewal reward process, and we summarize it in the following lemma.

\begin{lemma}\label{rrp}
For $\delta>0$, $\{Z_\delta(t), t\geq 0\}$ is a renewal reward process and $\{P_\delta(t), t \ge 0\}$ is a semi-Markov process.
\end{lemma}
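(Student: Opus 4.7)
The lemma is essentially a direct consequence of Lemma \ref{distn-1}(ii), which shows that $\{(U_{\delta,n}, V_{\delta,n})\}_{n \ge 2}$ is an i.i.d.\ sequence that is independent of $(U_{\delta,1}, V_{\delta,1})$. The plan is just to match this structure against the standard definitions of the two process classes.

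For the renewal reward claim, I would take $V_{\delta,n}$ as the inter-arrival times and $U_{\delta,n}$ as the rewards. The points $T_{\delta,n} = \sum_{k=1}^n V_{\delta,k}$ then form a (delayed) renewal process, since by Lemma \ref{distn-1}(ii) the $V_{\delta,n}$ are independent with common distribution for $n\ge 2$. The jointly i.i.d.\ structure of $(U_{\delta,n}, V_{\delta,n})$ for $n \ge 2$, also given by Lemma \ref{distn-1}(ii), is exactly what is needed to call $Z_\delta(t) = \sum_{n=1}^{N_\delta(t)} U_{\delta,n}$ a renewal reward process.

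For the semi-Markov claim, I would observe that by the construction in Section \ref{sec:ttp} the process $P_\delta(\cdot)$ is piecewise constant, equal to $P_{\delta,n}$ on $[T_{\delta,n}, T_{\delta,n+1})$, so the embedded jump chain is $\{P_{\delta,n}\}$ and the sojourn time in $P_{\delta,n}$ is $V_{\delta,n+1}$. Since $P_{\delta,n+1} = P_{\delta,n}\, e^{U_{\delta,n+1}}$ with $(U_{\delta,n+1}, V_{\delta,n+1})$ i.i.d.\ and independent of the past by Lemma \ref{distn-1}(ii), the joint law of (sojourn time, next state) given current state $p$ is the single time-homogeneous kernel $Q(p, dq, dt) = \Pr(p\, e^{U_\delta} \in dq,\ V_\delta \in dt)$. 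This is precisely the defining property of a semi-Markov process.

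The only subtlety worth flagging is the atypical first pair $(U_{\delta,1}, V_{\delta,1})$, whose distribution depends on $A(0), B(0)$ rather than on $\delta$, so the renewal reward interpretation is strictly speaking the delayed one and the semi-Markov process is started from a non-stationary initial distribution. Neither of these affects the structural claims, and no serious obstacle is anticipated; this lemma is essentially a bookkeeping consequence of Lemma \ref{distn-1}(ii).
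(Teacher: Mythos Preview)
Your proposal is correct and matches the paper's own treatment: the paper does not even write out a proof of this lemma, simply remarking that it is ``clear'' from Corollary~\ref{marginal} (equivalently, from Lemma~\ref{distn-1}(ii)) that $\{Z_\delta(t)\}$ is a renewal reward process. Your plan is exactly this, with the added care of spelling out the semi-Markov kernel and flagging the delayed first cycle; there is nothing to add.
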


The next result from Brown and Solomon \cite{bs75} characterizes the asymptotic first and second moments of $Z_\delta(t)$ as $t\to\infty$, and is also helpful to identify the proper scaling in Theorem \ref{th:conv}.
\begin{theorem}[Brown and Solomon \cite{bs75}] \label{th:asy}
We have
\begin{equation} \label{eq:ezt}
E(Z_\delta(t)) = mt + O(1),
\end{equation}
where
\begin{equation}\label{eq:m}
m= \frac{1}{2}(\mu_a+\mu_b),
\end{equation}
and
\begin{equation} \label{eq:vzt}
\mbox{Var}(Z_\delta(t)) = s t + O(1),
\end{equation}
where
\begin{equation} \label{eq:s}
s= \frac{1}{4}(\sigma_a^2 + \sigma_b^2).
\end{equation}
Here $O(1)$ is a function that converges to a finite constant as $t \rightarrow \infty$.
\end{theorem}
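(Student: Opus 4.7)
The plan is to verify the theorem by applying the renewal-reward asymptotics of Brown and Solomon \cite{bs75} to the process $\{Z_\delta(t), t \ge 0\}$. By Lemma \ref{rrp} this is a (delayed) renewal-reward process whose inter-renewal times are $V_{\delta,n}$ and whose rewards are $U_{\delta,n}$, and by Lemma \ref{distn-1} the pairs $(U_{\delta,n}, V_{\delta,n})$ for $n \ge 2$ are i.i.d. with the common distribution described in Corollary \ref{marginal}(ii). Lemma \ref{thm2_moments} guarantees that $(U_\delta, V_\delta)$ has a moment generating function finite in a neighborhood of the origin, which is more than enough to fulfill the integrability hypotheses of \cite{bs75}.

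The first bookkeeping point is to absorb the initial delay. Writing
\[
Z_\delta(t) = U_{\delta,1}\mathbf{1}_{\{t \ge T_{\delta,1}\}} + \sum_{n=2}^{N_\delta(t)} U_{\delta,n},
\]
the pair $(U_{\delta,1},V_{\delta,1})$ has a different distribution from the bulk, but both $U_{\delta,1}$ and $V_{\delta,1}$ possess finite second moments by Lemma \ref{distn-1}. Consequently the delay contributes only $O(1)$ to both $E[Z_\delta(t)]$ and $\mathrm{Var}[Z_\delta(t)]$ and can be discarded into the error term. From this point on, the problem reduces to the undelayed renewal-reward process with common step $(U_\delta, V_\delta)$.

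For this bulk process, the Brown--Solomon result reads
\[
E[Z_\delta(t)] = \frac{E[U_\delta]}{E[V_\delta]}\, t + O(1), \qquad \mathrm{Var}[Z_\delta(t)] = \frac{\mathrm{Var}(U_\delta - m V_\delta)}{E[V_\delta]}\, t + O(1),
\]
where $m := E[U_\delta]/E[V_\delta]$ (this is the compact form of the usual ratio with $\mathrm{Var}(U)$, $\mathrm{Cov}(U,V)$, $\mathrm{Var}(V)$ in the numerator). Inserting $E[U_\delta]$ and $E[V_\delta]$ from Lemma \ref{thm2_moments} immediately gives $m = (\mu_a+\mu_b)/2$, which is \eqref{eq:m}.

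To identify $s$, I expand $\mathrm{Var}(U_\delta - m V_\delta) = \mathrm{Var}(U_\delta) - (\mu_a+\mu_b)\,\mathrm{Cov}(U_\delta,V_\delta) + \tfrac{1}{4}(\mu_a+\mu_b)^2\,\mathrm{Var}(V_\delta)$ and substitute the five moments supplied by Lemma \ref{thm2_moments}. All three terms share the common prefactor $2\delta/(\mu_b-\mu_a)^3$, and the remaining bracket simplifies using the identity $\tfrac{1}{4}(\mu_a+\mu_b)^2 - \mu_a\mu_b = \tfrac{1}{4}(\mu_b-\mu_a)^2$ to $(\sigma_a^2+\sigma_b^2)(\mu_b-\mu_a)^2/4$. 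Dividing by $E[V_\delta] = 2\delta/(\mu_b-\mu_a)$ then yields $s = (\sigma_a^2+\sigma_b^2)/4$, establishing \eqref{eq:s}. No genuine obstacle arises: the argument is a direct invocation of \cite{bs75} combined with the algebra of Lemma \ref{thm2_moments}, and the only care required is the separation of the delayed first cycle from the renewal-reward bulk so that the Brown--Solomon formulas apply verbatim.
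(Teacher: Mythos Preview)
Your proposal is correct and follows essentially the same approach as the paper: both invoke the Brown--Solomon renewal-reward asymptotics and substitute the moments of $(U_\delta,V_\delta)$ from Lemma~\ref{thm2_moments}. The only cosmetic differences are that you explicitly dispose of the delayed first cycle $(U_{\delta,1},V_{\delta,1})$ (the paper's proof does not mention this) and that you use the compact form $s=\mathrm{Var}(U_\delta-mV_\delta)/E[V_\delta]$, whereas the paper writes out the equivalent expanded expression $s=\frac{E(V_\delta^2)E(U_\delta)^2}{E(V_\delta)^3}-\frac{2E(U_\delta V_\delta)E(U_\delta)}{E(V_\delta)^2}+\frac{E(U_\delta^2)}{E(V_\delta)}$ before simplifying.
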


The  main result is given in the following theorem. For $t\ge 0$, define
\[ \clz_\delta(t) = \frac{\delta Z_\delta(t/\delta) - mt}{\sqrt{s\delta}},\]
where $m$ and $s$ are as given in \eqref{eq:m} and \eqref{eq:s}.

\begin{theorem} \label{th:conv}
Assume that $E(\ln^2[A(0)/B(0)])<\infty$. Then the process $\clz_\delta$ converges weakly to a standard Brownian Motion as $\delta \to 0$.
\end{theorem}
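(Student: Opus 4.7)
The plan is a two-stage argument: first a triangular-array invariance principle for a deterministic-time partial-sum process built from the zero-mean summands $\eta_{\delta,n} = U_{\delta,n} - m V_{\delta,n}$, and then a random time change that reintroduces the renewal count $N_\delta$. Set $D = \mu_b - \mu_a$. Lemma \ref{thm2_moments} gives, for $n \ge 2$, $E(\eta_{\delta,n}) = 0$ and $\Var(\eta_{\delta,n}) = \delta(\sigma_a^2+\sigma_b^2)/(2D) = 2s\delta/D$; more generally $E(\eta_{\delta,n}^k) = O(\delta)$ for every $k\ge 1$ by \eqref{moments}. The ratio $\Var(\eta_{\delta,n})/E(V_{\delta,n}) = s$ is exactly the variance rate predicted by Theorem \ref{th:asy} and justifies the normalization $\sqrt{s\delta}$ in the definition of $\clz_\delta$.

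The starting point is the identity
\begin{equation*}
\delta Z_\delta(t/\delta) - mt \;=\; \delta\sum_{n=1}^{N_\delta(t/\delta)} \eta_{\delta,n} \;-\; m\bigl(t - \delta T_{\delta,N_\delta(t/\delta)}\bigr).
\end{equation*}
The boundary correction on the right is bounded in absolute value by $m\delta V_{\delta,N_\delta(t/\delta)+1}$, so after division by $\sqrt{s\delta}$ it is at most $m\sqrt{\delta/s}\,V_{\delta,N_\delta(t/\delta)+1}$. I would first show that this vanishes uniformly on $[0,T]$ in probability: the IG density \eqref{meeting-time} has exponentially decaying tails on the $\delta$-scale, so a union bound over the $O_p(\delta^{-2})$ renewals in $[0,T/\delta]$ gives $\max_n V_{\delta,n} = O_p(\delta\log(1/\delta))$, whence the scaled remainder is $O_p(\sqrt{\delta}\log(1/\delta))\to 0$. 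The same idea, invoking the hypothesis $E(\ln^2[A(0)/B(0)])<\infty$, shows that discarding the first summand $\eta_{\delta,1}$, whose law depends on the initial values rather than on $\delta$, contributes only $O_p(\sqrt{\delta})$.

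For the leading sum, define
\begin{equation*}
W_\delta(u) \;=\; \sqrt{\delta/s}\,\sum_{n=2}^{\lfloor uD/(2\delta^2)\rfloor+1} \eta_{\delta,n}.
\end{equation*}
The summands form a triangular array of row-i.i.d., zero-mean variables with $\Var(W_\delta(u)) \to u$ and $E(\eta_{\delta,n}^4) = O(\delta)$; a direct Chebyshev computation yields a Lindeberg bound of order $\delta/\varepsilon^2$, so a triangular-array Donsker theorem gives $W_\delta \Rightarrow B$ in $D[0,T]$. Meanwhile the renewal theorem and the $O(\delta)$ variance of $V_{\delta,n}$ show that the random clock $C_\delta(t) = (2\delta^2/D)N_\delta(t/\delta)$ is monotone and converges uniformly on $[0,T]$ in probability to the identity. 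The random-time-change theorem then gives $W_\delta\circ C_\delta \Rightarrow B$, and since $\lfloor C_\delta(t)\cdot D/(2\delta^2)\rfloor = N_\delta(t/\delta)$, this is precisely the leading part of $\clz_\delta(t)$; combined with the boundary estimate the theorem follows.

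The principal obstacle is the joint treatment of $W_\delta$ and $C_\delta$. Because $N_\delta$ and the increments $\eta_{\delta,n}$ are measurable functions of the \emph{same} Brownian motions $X_a$ and $X_b$, the clock is not independent of the partial-sum process, and the random-time-change argument requires joint weak convergence of $(W_\delta, C_\delta)$ on $D[0,T]\times D[0,T]$, not merely the two marginal convergences. The explicit IG/NIG structure from Corollary \ref{marginal} together with \eqref{moments} makes all tail and moment inputs tractable, so this joint convergence can be reduced to the Lindeberg-type calculation for $W_\delta$ together with a functional weak law of large numbers for $C_\delta$.
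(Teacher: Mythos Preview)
Your proposal is correct and follows a route that is closely related to, but organized differently from, the paper's proof. The paper does \emph{not} form the combination $\eta_{\delta,n}=U_{\delta,n}-mV_{\delta,n}$. Instead it centers $U_{\delta,n}$ and $V_{\delta,n}$ separately, proves a triangular-array Donsker theorem for the pair $(u_m,v_m)$ of rescaled partial sums (yielding a two-dimensional Brownian motion $W=(W_1,W_2)$ with the covariance matrix given by Lemma~\ref{thm2_moments}), and then invokes the joint convergence result of Iglehart--Whitt together with a result from Jacod--Shiryaev to upgrade this to $(u_m,v_m,\tilde N_m)\Rightarrow (W_1,W_2,-W_2)$, where $\tilde N_m$ is the suitably centered and scaled renewal count. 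Finally it writes the \emph{exact} algebraic identity
\[
\clz_{\delta_m}(t)=\frac{1}{\sqrt{s}}\left[u_m\bigl(\delta_m^2 N_{\delta_m}(t/\delta_m)\bigr)+\frac{\mu_a+\mu_b}{\mu_b-\mu_a}\Bigl(\frac{\mu_b-\mu_a}{2}\Bigr)^{3/2}\tilde N_m(t)\right],
\]
so that the limit is read off directly as a linear combination of $W_1$ (time-changed by the deterministic limit of $\delta_m^2 N_{\delta_m}$) and $W_2$; a short covariance computation shows this combination is standard Brownian motion.

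Your $\eta$-decomposition effectively performs this linear combination \emph{before} taking limits, which collapses the argument to a one-dimensional FCLT plus a random time change; this is slightly more economical and makes the role of the constants $m$ and $s$ from Theorem~\ref{th:asy} transparent. The price is that you must handle the overshoot term $m(t-\delta T_{\delta,N_\delta(t/\delta)})$ explicitly, whereas the paper's exact identity has no such remainder. Conversely, the paper's route buys an off-the-shelf treatment of the ``principal obstacle'' you flag: the joint convergence of partial sums and renewal count is exactly what the Iglehart--Whitt result supplies, so the dependence issue is dispatched by citation rather than by a bespoke joint-tightness argument. Your claim $\max_n V_{\delta,n}=O_p(\delta\log(1/\delta))$ is somewhat optimistic as stated (the IG tail rate is $\delta$-independent), but the conclusion you need, namely $\sqrt{\delta}\max_n V_{\delta,n}\to 0$ in probability, follows easily from \eqref{moments} via a fourth-moment union bound, so this does not affect the validity of the argument.
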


\begin{remark}
We note that
\[
Z_\delta(t) =\sqrt{\frac{s}{\delta}} \hat Z_\delta(\delta t) + mt, \ t\ge 0.
\]
From Theorem \ref{th:conv}, for small $\delta$, we will use the following asymptotic model for logarithmic trading prices $Z(t)$ in Section \ref{numerical}:
\begin{align}\label{asymp-model}
\sqrt{\frac{s}{\delta}} B(\delta t) + mt,
\end{align}
where $\{B(t), t\ge 0\}$ is a standard Brownian motion. We note that \eqref{asymp-model} is normal distributed with mean $mt$ and variance $st$.
\end{remark}

\section{Parameter estimations} \label{par}
The process $\{P(t), t\geq 0\}$ is observable, while $\{(A(t), B(t)), t\geq 0\}$ may not be
publicly observable. The market ask and bid processes may be accessible to the brokers and dealers, but not to common traders. The question becomes how to find the parameters of $\{(A(t), B(t)), t\geq 0\}$ by observing $\{P(t), t\geq 0\}$. In this section we will estimate the parameters $\mu_a$, $\mu_b$, $\sigma_a$, $\sigma_b$ and $\delta$ using the method of moments.

Suppose that the sample data for the $i$th trading time $t_i$ and the $i$th trading
price $p_i$ are given for $i=1,2,\cdots,n$. Let
\[u_1 = \ln P_1, \ \ v_1 = t_1, \ \ \mbox{and} \ \ u_{i+1} = \ln (p_{i+1}/p_i),\;\;v_{i+1} = t_{i+1}-t_{i}, \ i\ge 1.\]
Then the sample data is given by $\{(u_i, v_i)\}_{i=1}^n$. Let
\begin{align*}
x_1 &= \sum\limits_{i = 1}^n \frac{{{v_i}}}{n},\;\;x_2 = \sum\limits_{i = 1}^n \frac{{u_i}}{n}, \;\; x_3 = \sum\limits_{i = 1}^n \frac{{v_i^2}}{n}, \ \ x_4 = \sum\limits_{i = 1}^n \frac{{u_i^2}}{n}, \ \ x_5 =\sum\limits_{i = 1}^n \frac{{v_i u_i}}{n}.
\end{align*}
We aim to derive explicit estimators of the five parameters $\mu_a$, $\mu_b$, $\sigma_a$, $\sigma_b$, $\delta$ using moment estimations. Define the estimators of $\mu_a$, $\mu_b$, $\sigma_a$, $\sigma_b$, $\delta$ as follows.
\begin{equation}\label{parameter-est}
\begin{aligned}
{\hat\mu _a}^n &= \frac{{{y_1} - {\sqrt{{y_1^2 - \frac{4({{y_1}{y_4} - {y_3}})}{{{y_2}}}}}}}}{2},\;\;{\hat\mu _b}^n = \frac{{{y_1} + {\sqrt{ {y_1^2 - \frac{4({{y_1}{y_4} - {y_3}})}{{{y_2}}}} }}}}{2}, \\
{\hat\sigma _a}^n &= \sqrt {({y_4} - {\hat\mu_a}^n{y_2})({\hat\mu_b}^n - {\hat\mu_a}^n)},\;\;{\hat\sigma _b}^n = \sqrt {({\hat\mu_b}^n{y_2} - {y_4})({\hat\mu_b}^n - {\hat\mu_a}^n)}, \\
\hat\delta^n  &= ({\hat\mu_b}^n - {\hat\mu_a}^n){x_1},
\end{aligned}
\end{equation}
where
\begin{align*}
y_1 = \frac{2x_2}{x_1},\;\;y_2 = \frac{x_3 - x_1^2}{x_1}, \;\;y_3 = \frac{x_4 - x_2^2}{x_1}, \;\;y_4 = \frac{x_5 - x_1 x_2}{x_1}.
\end{align*}
For convinence, denote $\Theta = (\mu_a, \mu_b, \sigma_a, \sigma_b, \delta)$ and $\hat\Theta^n = (\hat\mu_a^n, \hat\mu_b^n, \hat\sigma_a^n, \hat\sigma_b^n, \hat\delta^n).$
Let $\mathbf{g}: \mathbb{R}^5 \to \mathbb{R}^5$ be the differentiable function such that
\[
\hat\Theta^n = \mathbf{g}(x_1, x_2, x_3, x_4, x_5).
\]
Note that $\mathbf{g}$ can be uniquely determined by \eqref{parameter-est} and has an explicit expression.
\begin{lemma}\label{estimation1}
 The estimators $\hat\Theta^n$ is well defined, i.e.,
\begin{align}
 {y_1^2 - \frac{4({{y_1}{y_4} - {y_3}})}{{{y_2}}}} \geq 0, \ \ ({y_4} - {\hat\mu_a}^n{y_2})({\hat\mu_b}^n - {\hat\mu_a}^n) \geq 0, \ \ ({\hat\mu_b}^n{y_2} - {y_4})({\hat\mu_b}^n - {\hat\mu_a}^n) \geq 0, \label{well-define}
\end{align}
and as $n\to\infty,$
\begin{align}\label{consistent}
\hat\Theta^n \to \Theta, \ \mbox{almost surely.}
\end{align}
Furthermore, $\sqrt{n}(\hat\Theta^n - \Theta)$ converges weakly to a five dimensional normal distribution with zero mean and covariance matrix $\nabla \mathbf{g}(\Theta) \Sigma$, where $\Sigma$ is the covariance matrix of $(V_\delta, U_\delta, V^2_\delta, U^2_\delta, U_\delta V_\delta)$, and $\nabla \mathbf{g}$ is the gradient of $\mathbf{g}.$
\end{lemma}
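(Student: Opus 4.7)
The plan is to express $\hat\Theta^n = \mathbf{g}(x_1,\ldots,x_5)$ for a smooth function $\mathbf{g}$, show that the sample moments converge almost surely and at a $1/\sqrt{n}$ rate to the corresponding population moments, and then invoke the continuous mapping theorem and the multivariate delta method. All three conclusions---well-definedness, strong consistency, and asymptotic normality---will follow from standard large-sample machinery applied in this common framework.

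First I would carry out a deterministic algebraic check: substituting the population moments from Lemma \ref{thm2_moments} into the auxiliary quantities $y_1,y_2,y_3,y_4$ yields
\[
y_1 \to \mu_a+\mu_b,\quad y_2 \to \frac{\sigma_a^2+\sigma_b^2}{(\mu_b-\mu_a)^2},\quad y_3 \to \frac{\mu_b^2\sigma_a^2+\mu_a^2\sigma_b^2}{(\mu_b-\mu_a)^2},\quad y_4 \to \frac{\mu_b\sigma_a^2+\mu_a\sigma_b^2}{(\mu_b-\mu_a)^2},
\]
and a short calculation then shows that at the population level the three quantities in \eqref{well-define} reduce to $(\mu_b-\mu_a)^2$, $\sigma_a^2$, and $\sigma_b^2$ respectively---all strictly positive---and that plugging these into \eqref{parameter-est} recovers $\Theta$. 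This confirms that $\mathbf{g}$ is well-defined and continuously differentiable on a neighborhood of the population moment vector, with $\mathbf{g}$ mapping that vector to $\Theta$.

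For \eqref{well-define} and \eqref{consistent}, I would apply the strong law of large numbers. By Lemma \ref{distn-1}(ii) the pairs $(U_{\delta,n},V_{\delta,n})$ are i.i.d.\ for $n\ge 2$, and Lemma \ref{thm2_moments} guarantees that their joint moment generating function exists in a neighborhood of the origin, so cross-moments of every finite order are finite. The SLLN applied to the i.i.d.\ tail gives $(x_1,\ldots,x_5) \to (E(V_\delta), E(U_\delta), E(V_\delta^2), E(U_\delta^2), E(U_\delta V_\delta))$ almost surely; the single anomalous initial term contributes only $O(1/n)$ and does not affect the limit. Positivity of the limiting discriminants then yields \eqref{well-define} almost surely for all sufficiently large $n$, and the continuous mapping theorem applied to $\mathbf{g}$ yields $\hat\Theta^n \to \Theta$ almost surely.

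Finally, for asymptotic normality, I would invoke the multivariate central limit theorem on the i.i.d.\ vector $(V_{\delta,n}, U_{\delta,n}, V_{\delta,n}^2, U_{\delta,n}^2, U_{\delta,n}V_{\delta,n})_{n\ge 2}$, whose covariance matrix is $\Sigma$, to obtain $\sqrt{n}\bigl((x_1,\ldots,x_5) - (\text{population moments})\bigr) \Rightarrow N(0,\Sigma)$; the $n=1$ term contributes $O_p(1/\sqrt{n})$ and disappears by Slutsky. Since $\mathbf{g}$ is continuously differentiable at the population moment vector, the multivariate delta method delivers $\sqrt{n}(\hat\Theta^n - \Theta) \Rightarrow N\bigl(0, \nabla\mathbf{g}(\Theta)\,\Sigma\,\nabla\mathbf{g}(\Theta)^\top\bigr)$, matching the stated covariance structure. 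The main obstacle is the algebraic bookkeeping behind the second paragraph---checking explicitly that $\mathbf{g}$ inverts the moment map and that each of the three discriminants reduces to a manifestly positive population quantity---but this is careful arithmetic rather than anything conceptually subtle, and once it is in hand the probabilistic content is entirely standard.
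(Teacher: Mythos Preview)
Your handling of consistency and asymptotic normality matches the paper's: SLLN plus continuous mapping for \eqref{consistent}, and the multivariate CLT plus the delta method for the limiting normal law. That part is fine.

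The gap is in well-definedness. You argue that the three quantities in \eqref{well-define} are strictly positive at the \emph{population} moment vector, and then conclude by continuity that they are nonnegative for the \emph{sample} quantities ``almost surely for all sufficiently large $n$.'' But the lemma asserts that $\hat\Theta^n$ is well defined---i.e., that \eqref{well-define} holds---for every $n$ and every sample, not merely eventually along almost every sample path. Your argument does not deliver this; for small $n$ (or on exceptional paths) you have shown nothing.

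The paper closes this gap by proving \eqref{well-define} as a \emph{deterministic} algebraic fact about the sample moments $x_1,\ldots,x_5$, with no appeal to probability at all. After rewriting the first discriminant as
\[
y_1^2 - \frac{4(y_1y_4 - y_3)}{y_2} = \frac{4}{x_1^2(x_3 - x_1^2)}\Bigl[x_2^2(x_3-x_1^2) - 2x_1x_2(x_5-x_1x_2) + x_1^2(x_4-x_2^2)\Bigr],
\]
the bracket is shown to be nonnegative via an AM--GM step together with the sample Cauchy--Schwarz inequality $(x_3-x_1^2)(x_4-x_2^2)\ge (x_5-x_1x_2)^2$. The remaining two inequalities in \eqref{well-define} are then reduced to $y_2y_3\ge y_4^2$, which is exactly the same Cauchy--Schwarz relation. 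This is the missing idea: well-definedness is a finite-sample algebraic identity, not an asymptotic consequence of convergence to the population values.
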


\section{Numerical examples} \label{numerical}

In this section we apply our model to the real data, with an aim to forecast the trading price movement over a short period. We develop an asymptotic GBM model for trading prices as follows. Given the sample data $\{(u_i, v_i)\}_{i=1}^n$, we first estimate the parameters $\mu_a, \mu_b, \sigma_a, \sigma_b$ and $\delta$ as in \eqref{parameter-est}, and use the estimators $\hat\mu_a^n, \hat\mu_b^n, \hat\sigma_a^n$ and $\hat\sigma_b^n$ to compute $m$ and $s$ by substituting $\mu_a, \mu_b, \sigma_a, \sigma_b$ with $\hat\mu_a^n, \hat\mu_b^n, \hat\sigma_a^n,\hat\sigma_b^n$, respectively, in \eqref{eq:m} and \eqref{eq:s}. Typically, the estimator $\hat\delta^n$ is small (see Figures 6 - 9) and so from Theorem \ref{th:conv}, we approximate $Z(t)$ by a $N(st, mt)$ random variable. Hence the prediction formula for $\ln P(t) - \ln P(0)$ is
\[
 \frac{(\hat\mu_a^n + \hat\mu_b^n)t}{2},
\]
and the upper and lower bounds are chosen to be
\[
\frac{(\hat\mu_a^n + \hat\mu_b^n)t}{2} + \frac{3\sqrt{[(\hat\sigma^n_a)^2+(\hat\sigma^n_b)^2]t}}{2}, \ \ \frac{(\hat\mu_a^n + \hat\mu_b^n)t}{2}  - \frac{3\sqrt{[(\hat\sigma^n_a)^2+(\hat\sigma^n_b)^2]t}}{2}.
\]

We next apply the above formulas to real data. Here we select the stock SUSQ (Susquehanna Bancshares Inc). The data is chosen from 01/04/2010 9:30AM to 01/04/2010 4:00PM, including the trading prices and trading times. The unit of trading prices is dollars and the unit of the difference of consecutive trading times is seconds. We perform the back test to evaluate the performance of the prediction. To be precise,  we predict the logarithmic trading price at each trading time using the $10$-minute data $1$-minute before the trading time. For example, observing that there is a trade at 10:34:56, we then use the data from 10:23:56 to 10:33:56 to estimate the parameters and predict the logarithmic trading price at 10:34:56, and the last trading price during the time interval from 10:23:56 to 10:33:56 is regarded as $P(0)$.  At the same time we calculate the upper and lower bounds of the prediction at that trading time. We note that even though the drift and volatility parameters in the asymptotic model \eqref{asymp-model} is constant, the estimated parameters for predictions are actually time-varying. We compare this
predicted logarithmic trading prices with the real trading prices in Figure \ref{fig:1}. We do the similar prediction for each trading time but using the $10$-minute data $2$-, $5$-, $10$-minute before the trading time respectively. The comparisons are shown in Figures \ref{fig:2}-\ref{fig:4}. Define
\[
\mbox{Relative error (RE)} = \frac{\mbox{Real price - Predicted price}}{\mbox{Real price}}.
\]
For the predictions $1$-, $2$-, $5$-, $10$-minute into the future, the maximum absolute REs are $0.0055$, $0.0058$, $0.0080$, $0.0152$, respectively. We see that the prediction $1$-minute into the future provides very good forecasting, and the accuracy of the prediction deteriorates as we try to predict farther into the future, which is to be expected. We note that our asymptotic model is obtained when $\delta$ is small. We present the values of $\hat\delta^n$ for all four predictions in Figures 6 - 9, and observe that all values are $\mathcal{O}(10^{-3})$. Thus it is reasonable to use the asymptotic results in the regime $\delta \rightarrow 0.$

\begin{figure}[H]
\centering
  \includegraphics[height=7cm, width=13cm, keepaspectratio]{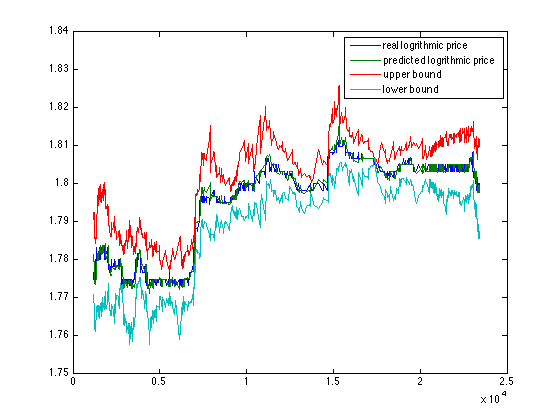}
  \caption{Predictions of trading prices using $10$-minute data $1$-minute before each trading time.} \label{fig:1}
\end{figure}

\begin{figure}[H]
\centering
  \includegraphics[height=7cm, width=13cm, keepaspectratio]{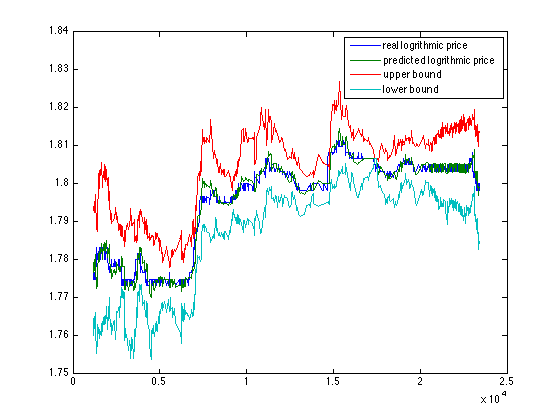}
  \caption{Predictions of trading prices using $10$-minute data 2-minute before each trading time.} \label{fig:2}
\end{figure}

\begin{figure}[H]
\centering
  \includegraphics[height=7cm, width=13cm, keepaspectratio]{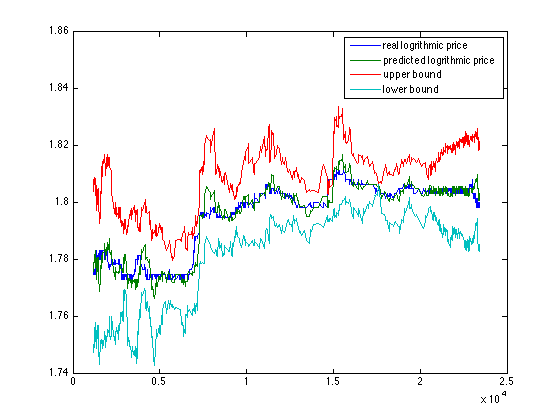}
  \caption{Predictions of trading prices using $10$-minute data $5$-minute before each trading time.} \label{fig:3}
\end{figure}

\begin{figure}[H]
\centering
  \includegraphics[height=7cm, width=13cm, keepaspectratio]{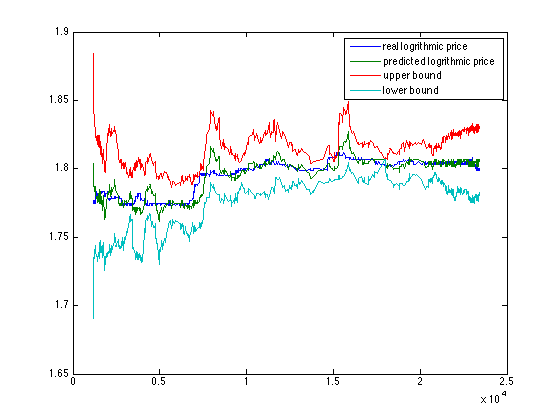}
  \caption{Predictions of trading prices using $10$-minute data $10$-minute before each trading time.} \label{fig:4}
\end{figure}

\begin{figure}[H]
\centering
  \includegraphics[width=12.5cm]{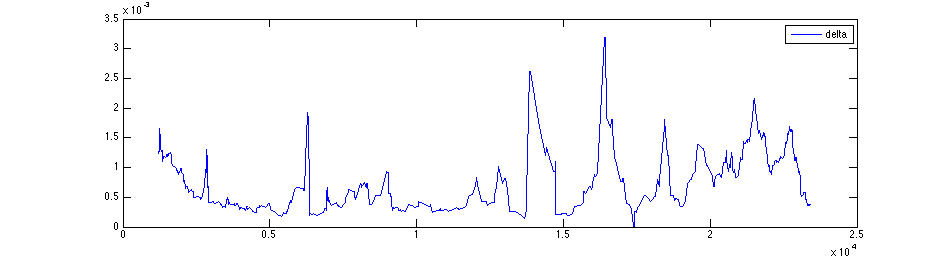}
  \caption{Values of $\hat\delta^n$ when using $10$-minute data $1$-minute before each trading time.}
\end{figure}

\begin{figure}[H]
\centering
  \includegraphics[width=12.5cm]{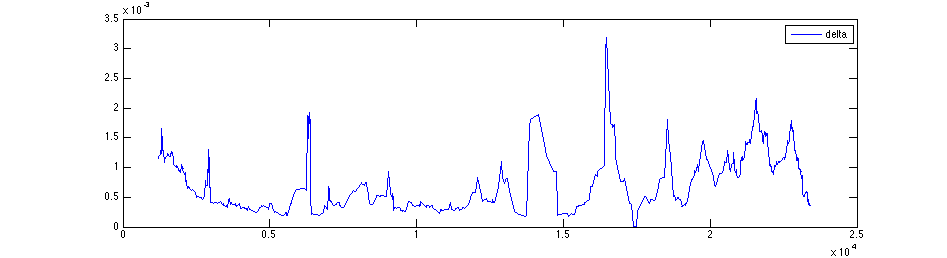}
  \caption{Values of $\hat\delta^n$ when using $10$-minute data 2-minute before each trading time.}
\end{figure}

\begin{figure}[H]
\centering
  \includegraphics[width=12.5cm]{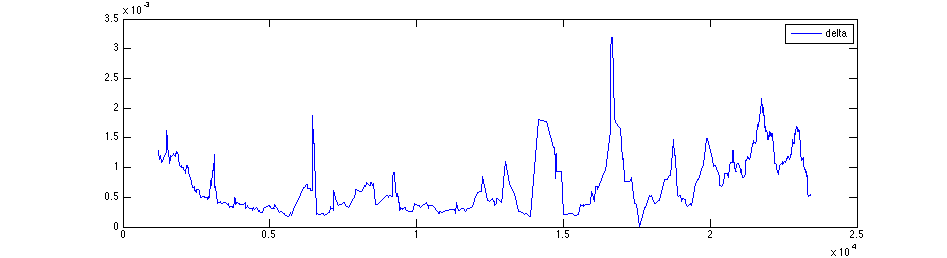}
  \caption{Values of $\hat\delta^n$ when using $10$-minute data $5$-minute before each trading time.}
\end{figure}

\begin{figure}[H]
\centering
  \includegraphics[width=12.5cm]{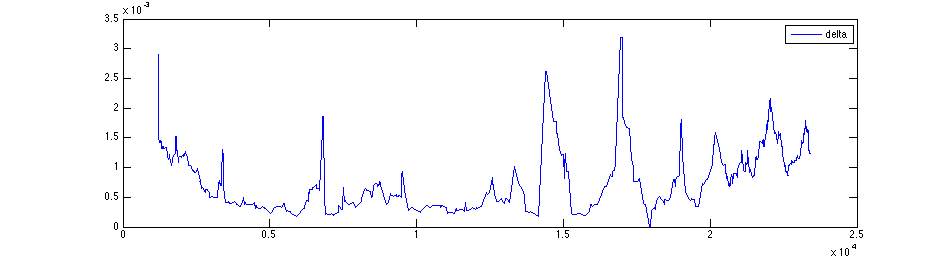}
  \caption{Values of $\hat\delta^n$ when using $10$-minute data $10$-minute before each trading time.}
\end{figure}

\section*{Appendix}

\begin{proof}[Proof of Lemma \ref{ratiodisn}]
We note that $\frac{A(t)}{B(t)} = e^{Y_a(t) - Y_b(t)}$, and $Y_a(t)-Y_b(t)$ is a RBM with mean $\mu_a-\mu_b<0$ and variance $\sigma_a^2+\sigma_b^2$. Thus using Taylor series expansion and Fubini's theorem, we have
\begin{align}\label{mrbm_moments}
E\left[\left(\frac{A(t)}{B(t)}\right)^k\right] & = E\left[e^{k(Y_a(t) - Y_b(t))}\right]  \nonumber\\
& = 1+E\left[\sum_{j=1}^\infty\frac{k^j(Y_a(t) - Y_b(t))^j}{j!}\right] \nonumber \\
& = 1+ \sum_{j=1}^\infty\frac{k^jE[(Y_a(t) - Y_b(t))^j]}{j!},
\end{align}
where from Theorem 1.3 in \cite{AW87},
\begin{align}\label{rbm_moments}
E[(Y_a(t) - Y_b(t))^j] & = E[(Y_a(\infty) - Y_b(\infty))^j]\int_0^\infty g_j(x) F(t;x,0) dx.
\end{align}
Here $Y_a(\infty)-Y_b(\infty)$ is the weak limit of $Y_a(t)-Y_b(t)$ as $t\to\infty$, and $g_k(x)$ is a gamma density with mean $k/2$ and variance $k/4$. We note that from \eqref{rbm_disn}, $Y_a(\infty)-Y_b(\infty)$ follows an exponential distribution with mean $\mu\equiv (\sigma^2_a+\sigma^2_b)/[2(\mu_b-\mu_a)]$, and thus
\begin{align}\label{stationary_moments}
E[(Y_a(\infty)-Y_b(\infty))^j] = j! \mu^j.
\end{align}
Furthermore, the gamma density function is given by
\begin{align}\label{gamma_disn}
g_j(x) = \frac{2^jx^{j-1}}{(j-1)!} e^{-2x}, \ \ x\ge 0.
\end{align}
Putting \eqref{stationary_moments},\eqref{gamma_disn} and \eqref{rbm_moments} into \eqref{mrbm_moments}, using Fubini's theorem again, we have
\begin{align*}
E\left[\left(\frac{A(t)}{B(t)}\right)^k\right] & = 1+ \sum_{j=1}^\infty\int_0^\infty\frac{k^j \mu^j 2^jx^{j-1}}{(j-1)!} e^{-2x} F(t; x, 0) dx \\
& =1 + \int_0^\infty\sum_{j=1}^\infty\frac{k^j \mu^j 2^jx^{j-1}}{(j-1)!} e^{-2x} F(t; x, 0) dx \\
& = 1 + 2k\mu \int_0^\infty e^{2k\mu x - 2x }F(t; x, 0) dx.
\end{align*}
\end{proof}

\begin{proof}[Proof of Proposition \ref{convergence-MRBM}]  For (i), recall that for $t\ge 0$,
\begin{align*}
A(t) & = \exp\left\{X_a(t) + \frac{1}{2}L(t)\right\}, \\
B(t) & = \exp\left\{X_b(t) - \frac{1}{2}L(t)\right\},
\end{align*}
where $L(t) = \sup_{0\le s\le t} (X_a(s)-X_b(s))^-$. So it suffices to show that 
\[
X_a(T_{\delta,n}) - X_b(T_{\delta,n}) = L(T_{\delta,n}).
\]
Now recall that $T_{\delta,0}=0$ and $T_{\delta,n}= \inf\{t\ge 0: X_a(t) -X_b(t)=-2(n-1)\delta\}$, and so $X_a(T_{\delta,n}) - X_b(T_{\delta,n}) = -2(n-1)\delta$, and $X_a(t)-X_b(t)\ge -2(n-1)$ for $t\in[0, T_{\delta,n}]$. Thus 
\[
L(T_{n,\delta}) = 2(n-1)\delta,
\]
and so 
\[
X_a(T_{\delta,n}) - X_b(T_{\delta,n}) = L(T_{\delta,n})=2(n-1)\delta.
\]
To show (ii) and (iii), we first recall that
\begin{align*}
A_\delta(t) & = \exp\left\{X_a(t) + \sum_{n=1}^\infty (n-1)\delta 1_{\{t\in [T_{\delta, (n-1)}, T_{\delta,n})\}}\right\} \\
B_\delta(t) & = \exp\left\{X_b(t) - \sum_{n=1}^\infty (n-1)\delta 1_{\{t\in [T_{\delta, (n-1)}, T_{\delta,n})\}}\right\}.
\end{align*}
Now for $t\in [T_{\delta, n-1}, T_{\delta,n}), n=1, 2, \ldots$, noting that $X_a(T_{\delta,n-1})-X_b(T_{\delta, n-1}) = -2(n-2)\delta$ and that $X_a(t)-X_b(t)$ must be great than $-2(n-1)\delta$, we have that
\begin{align}
L(t) = \sup_{0\le s\le t} (X_a(s)-X_b(s))^- \in [2(n-2)\delta,\ 2(n-1)\delta).
\end{align}
Thus for $t\ge 0$, $A_\delta(t) \ge A(t)$, and
\begin{align}
\sup_{0\le s\le t} \frac{A_\delta(s)}{A(s)} =\sup_{0\le s\le t}\sum_{n=1}^\infty 1_{s\in[T_{\delta, n-1}, T_{\delta, n})}  \exp\left\{(n-1)\delta-\frac{1}{2}L(s)\right\}  \in [e^\delta, 1),
\end{align}
which yields
\begin{align}
\sup_{0\le s\le t}  \frac{A_\delta(s)}{A(s)}\to 1, \ \ \mbox{as $\delta\to0.$}
\end{align}
Similarly, it can be shown that for $t\ge 0,$ $B(t)\ge B_\delta(t)$, and
\begin{align}
\sup_{0\le s\le t}  \frac{B(s)}{B_\delta(s)} \to 1, \ \ \mbox{as $\delta\to0.$}
\end{align}

\end{proof}

\begin{proof}[Proof of Lemma \ref{distn-1}]
Let for $t\ge 0$,
\begin{align*}
X(t) =\begin{pmatrix} X_a(t) \\ X_b(t)\end{pmatrix}= \begin{pmatrix} \alpha \\ \beta\end{pmatrix} + \begin{pmatrix} \mu_a t \\ \mu_b t \end{pmatrix}  + \begin{pmatrix} \sigma_aW_a(t) \\ \sigma_b W_b(t) \end{pmatrix}.
\end{align*}
Then $V_{\delta,1}$ and $U_{\delta,1}$ are the first meeting time and point of $X_a$ and $X_b$.
Let $\theta = \arctan(\sigma_a\sigma_b^{-1})$, and define
\[
M = \begin{pmatrix}
\cos\theta & \sin\theta \\
-\sin\theta & \cos\theta
\end{pmatrix} \begin{pmatrix}
\sigma_a^{-1} & 0 \\
0 & \sigma_b^{-1}
\end{pmatrix},
\]
and for $t\ge 0,$
\begin{align*}
& \check X(t) = M X(t)  \\
&=  \begin{pmatrix} \alpha\sigma_a^{-1}\cos\theta + \beta \sigma_b^{-1}\sin\theta \\ -\alpha\sigma_a^{-1}\sin\theta + \beta \sigma_b^{-1}\cos\theta\end{pmatrix}  + \begin{pmatrix} \mu_a\sigma_a^{-1}\cos\theta + \mu_b \sigma_b^{-1}\sin\theta \\ -\mu_a\sigma_a^{-1}\sin\theta + \mu_b \sigma_b^{-1}\cos\theta\end{pmatrix} t+  \begin{pmatrix}
 W_a(t)\cos\theta +  W_b(t) \sin\theta\\
- W_a(t) \sin\theta+  W_b(t)\cos\theta
\end{pmatrix}.
\end{align*}
Let
\begin{align*}
 \check a & =\alpha\sigma_a^{-1}\cos\theta + \beta \sigma_b^{-1}\sin\theta,\\
\check b & = -\alpha\sigma_a^{-1}\sin\theta + \beta \sigma_b^{-1}\cos\theta,\\
\check \mu_a & =\mu_a\sigma_a^{-1}\cos\theta + \mu_b \sigma_b^{-1}\sin\theta, \\
\check\mu_b & = -\mu_a\sigma_a^{-1}\sin\theta + \mu_b \sigma_b^{-1}\cos\theta, \\
\check B_a(t) & =  W_a(t)\cos\theta +  W_b(t) \sin\theta, \\
\check B_b(t) & = - W_a(t) \sin\theta+  W_b(t)\cos\theta.
\end{align*}
We note that
\begin{align*}
V_{\delta,1} & = \inf\{t\ge 0: X_a(t) = X_b(t)\} \\
& = \inf\left\{t\ge 0: \check X(t) \in \{(x,y): y=0\} \right\} \\
& = \inf\{t\ge 0: \check B_b(t) + \check\mu_b t =-\check b\}.
\end{align*}
Noting that $\check B_b$ is a standard Brownian motion, using Girsonov theorem, and from (5.12) in \cite[Chapter 3.5.C]{ks91}, we have for $t\ge 0,$
\begin{align*}
\Pr(V_{\delta,1}\in dt) & = \frac{|\check b|}{\sqrt{2\pi t^3}} \exp\left\{-\frac{(-\check b - \check \mu_b t)^2}{2t} \right\} dt\nonumber\\
& =  \frac{\alpha-\beta}{\sqrt{2\pi(\sigma_a^2 + \sigma_b^2) t^3} } \exp\left\{-\frac{[\alpha-\beta - (\mu_b -\mu_a)t]^2}{2(\sigma_a^2 + \sigma_b^2) t}\right\} dt.
\end{align*}
Next noting that $\check B_a$ and $V_{\delta,1}$ are independent, we have for $t\ge 0$ and $x\in \mathbb{R}$,
\begin{align*}
 &\Pr(V_{\delta,1}\in dt, U_{\delta,1} \in dx) \\
 & = \Pr(V_{\delta,1}\in dt, X_a(V_{\delta,1}) =X_b(V_{\delta,1})  \in dx) \\
 & = \Pr\left(V_{\delta,1}\in dt, \frac{\check a + \check\mu_a V_{\delta,1} + \check B_a(V_{\delta,1})}{\sigma_a^{-1}\cos\theta + \sigma_b^{-1}\sin\theta} \in dx\right) \\
 & = \Pr\left(\left.\frac{\check a + \check\mu_a t + \check B_a(t)}{\sigma_a^{-1}\cos\theta + \sigma_b^{-1}\sin\theta} \in dx\right|V_{\delta,1}\in dt\right)  \Pr(V_{\delta,1}\in dt) \\
& =  \frac{\sigma_a^{-1}\cos\theta + \sigma_b^{-1}\sin\theta}{\sqrt{2\pi t}} \exp\left\{- \frac{((\sigma_a^{-1}\cos\theta + \sigma_b^{-1}\sin\theta) x- \check a - \check\mu_a t)^2}{2t}\right\} dx \Pr(V_{\delta,1}\in dt) \\
& = \frac{\alpha-\beta}{2\pi t^2 \sigma_a \sigma_b} \exp\left\{-\frac{\left[\frac{\sigma_b}{\sigma_a}(x-\alpha-\mu_at)+\frac{\sigma_a}{\sigma_b}(x-\beta-\mu_bt)\right]^2+[\alpha-\beta - (\mu_b -\mu_a)t]^2}{2(\sigma_a^2 + \sigma_b^2) t}\right\} dx dt,
\end{align*}
where the last equality follows from the identities that
\[
\cos\theta = \frac{\sigma_b}{\sqrt{\sigma^2_a+\sigma^2_b}}, \ \sin\theta = \frac{\sigma_a}{\sqrt{\sigma^2_a+\sigma^2_b}}.
\]
This proves (i). For (ii), we see that for $t\in [0, T_{\delta, n+1}-T_{\delta, n}), n = 1, 2, \ldots$,
\begin{align*}
\tilde{X}_{a,n}(t)\equiv \ln A_\delta(t+T_{\delta,n}) - \ln A_\delta(T_{\delta, n}-) & =  X_a(t+T_{\delta,n})  - X_a(T_{\delta, n}) +  \delta, \\
\tilde{X}_{b,n}(t)\equiv \ln B_\delta(t+T_{\delta,n}) - \ln B_\delta(T_{\delta, n}-) & =  X_b(t+T_{\delta,n})  - X_b(T_{\delta, n}) -  \delta.
\end{align*}
Thus from the strong Markov property of Brownian motions, $\{\tilde{X}_{a,n}(t), t\ge 0\}$ and $\{\tilde{X}_{b,n}(t), t\ge 0\}$ are independent Brownian motions with the initial values $\delta$ and $-\delta$, and the same drifts and variances as $X_a$ and $X_b$. Furthermore, they are independent of $\mathcal{F}_{T_{\delta,n}}$, where
\begin{align}\label{stopping}
\mathcal{F}_t = \sigma\{(X_a(s), X_b(s)), 0\le s\le t\}.
\end{align}
Thus if we let $\tilde{T}_n$ and $\tilde{L}_n$ denote the first meeting time and meeting point of $\tilde{X}_{a,n}$ and $\tilde{X}_{b,n}$, then $\{(\tilde{L}_n, \tilde{T}_n), n=1, 2, \ldots\}$ is an i.i.d. sequence, which is independent of $(U_{\delta,1}, V_{\delta,1})$, and has the same distribution as $(U_{\delta,1}, V_{\delta,1})$ with $\alpha=\delta$ and $\beta=-\delta.$ Finally, noting that $X_a(T_{\delta, n}) - X_b(T_{\delta, n}) = -2(n-1)\delta$, we have that the first meeting time of $\tilde{X}_{a,n}$ and $\tilde{X}_{b,n}$ is given by
\begin{align*}
\tilde{T}_{n} & =  \inf\{t\ge 0: \tilde{X}_{a,n}(t) =\tilde{X}_{b,n}(t)\} \\
 & = \inf\{t\ge 0:  X_a(t+T_{\delta,n})  - X_a(T_{\delta, n}) + \delta  = X_b(t+T_{\delta,n})  - X_b(T_{\delta, n}) - \delta\} \\
 & =  \inf\{t\ge 0:  X_a(t+T_{\delta,n})   - X_b(t+T_{\delta,n}) = - 2n\delta\} \\
 & = T_{\delta, n+1} -T_{\delta,n} \\
 & = V_n,
\end{align*}
and the first meeting point of $\tilde{X}_{a,n}$ and $\tilde{X}_{b,n}$ is given by
\begin{align*}
\tilde{L}_n & = \tilde{X}_{a,n}(\tilde{T}_n) = \ln A_\delta(T_{\delta,n+1}-) - \ln A_\delta(T_{\delta,n}-) = \ln P_{n+1} - \ln P_n = U_n.
\end{align*}
To summarize, we have shown that $\{(U_n, V_n), n= 2, 3, \ldots\}$ is an i.i.d. sequence, which is independent of $(V_{\delta,1}, U_{\delta,1})$, and has the same distribution as $(V_{\delta,1}, U_{\delta,1})$ with $\alpha=\delta$ and $\beta=-\delta.$
\end{proof}

\begin{proof}[Proof of Lemma \ref{thm2_moments}] Assume $A(0)=e^\delta$ and $B(0)=e^{-\delta}$. Then $(U_{\delta,1}, V_{\delta,1})$ has the same distribution as $(U_\delta,V_\delta).$
Let
\begin{eqnarray*}
Y_a(t) &=& \exp \left\{ {\theta _1}X_a(t) - ({\theta _1}{\mu _a} + \frac{1}{2}\theta _1^2\sigma _a^2)t \right\},\\
Y_b(t) &=& \exp \left\{ {\theta _2}X_b(t) - ({\theta _2}{\mu _b} + \frac{1}{2}\theta
_2^2\sigma _b^2)t \right\},
\end{eqnarray*}
where $\theta_1$ and $\theta_2$ are arbitrary real numbers. Then $\{Y_a(t), t\geq 0\}$ and $\{Y_b(t), t\geq 0\}$ are independent, and $\{(Y_a(t),Y_b(t)), t\geq 0\}$ is a $\{\mathcal{F}_t\}_{t\ge 0}$ martingale (see the beginning of Section 5 of Chapter 7 in \cite{kt75}), where $\mathcal{F}_t$ is defined in \eqref{stopping}. We also note that $V_{\delta,1}$ is an $\{\mathcal{F}_t\}_{t\ge 0}$ stopping time with finite mean and variance ($V_{\delta,1}$ follows IG distribution from Lemma \ref{distn-1}). Hence optional stopping theorem yields
\begin{equation*}
(E(Y_a(V_{\delta,1})), E(Y_b(V_{\delta,1}))) = (E(Y_a(0)), E(Y_b(0))),
\end{equation*}
and so $E[Y_a(V_{\delta,1})Y_b(V_{\delta,1})]  = E[Y_a(0)Y_b(0)]$. More precisely, we have
\begin{equation*}
E\left\{ {\exp \{ ({\theta _1} + {\theta _2})U_{\delta,1} - ({\theta _1}{\mu _a} + \frac{1}{2}\theta _1^2\sigma
_a^2 + {\theta _2}{\mu _b} + \frac{1}{2}\theta _2^2\sigma _b^2)V_{\delta,1}\} } \right\} = \exp \{ [{\theta _1} -
{\theta _2}]\delta\}.
\end{equation*}
Let
\begin{eqnarray*}
&&{\theta _1} + {\theta _2} =  s, \\
&&{\theta _1}{\mu _a} + \frac{1}{2}\theta _1^2\sigma _a^2 + {\theta _2}{\mu _b} + \frac{1}{2}\theta
_2^2\sigma _b^2 = - t.
\end{eqnarray*}
Solving $\theta_1$ and $\theta_2$ in terms of $s$ and $t$, we obtain
\begin{align*}
{\theta _1(s,t)} &= \frac{{({\mu _b} - {\mu _a} + s\sigma _b^2) \pm \sqrt {{{({\mu _b} - {\mu _a} + s\sigma
_b^2)}^2} - (\sigma _a^2 + \sigma _b^2)({s^2}\sigma _b^2 + 2t + 2s{\mu _b})} }}{{\sigma _a^2 + \sigma _b^2}}, \\
{\theta _2(s,t)} &=   s - \theta_1(s,t).
\end{align*}
Letting $s=0$, and noting that $V_{\delta,1}$ follows inverse Gaussian distribution (see \eqref{meeting-time}), the moment generating function of $V_{\delta,1}$ is
\[
E(\exp(t V_1)) = \frac{{({\mu _b} - {\mu _a}) - \sqrt {{{({\mu _b} - {\mu _a})}^2} - 2t (\sigma _a^2 + \sigma _b^2)} }}{{\sigma _a^2 + \sigma _b^2}}.
\]
Thus the solutions of $\theta_1(s,t)$ should be as in (\ref{theta1}), and the moment generating function $\phi(s,t)$ of $(U_\delta,V_\delta)$ is given by \eqref{mgf} with $\theta(s,t)$ instead of $\theta_1(s,t)$ as in \eqref{theta1}. To compute the moments, we first need some simple results about $\theta_1(s,t)$ as follows.
\begin{align*}
&{\theta }(0,0)=0,\\
&{\left. {\frac{{\partial {\theta}(s,t)}}{{\partial t}}} \right|_{s = t = 0}}= \frac{1}{{{\mu _b} -
{\mu _a}}},\;\;{\left. {\frac{{\partial {\theta}(s,t)}}{{\partial s}}} \right|_{s = t = 0}}= \frac{{{\mu
_b}}}{{{\mu _b} - {\mu _a}}}, \\
 &{\left. {\frac{{{\partial ^2}{\theta}(s,t)}}{{\partial {t^2}}}} \right|_{s = t =
0}}=\frac{{\sigma _a^2 + \sigma _b^2}}{{{{({\mu _b} - {\mu _a})}^3}}},\;\;{\left. {\frac{{{\partial ^2}{\theta}(s,t)}}{{\partial {s^2}}}}
\right|_{s = t = 0}} = \frac{{\mu _b^2\sigma _a^2 + \mu _a^2\sigma _b^2}}{{{{({\mu _b} - {\mu_a})}^3}}},\\
&{\left. {\frac{{{\partial ^2}{\theta}(s,t)}}{{\partial s\partial t}}} \right|_{s = t = 0}} =
\frac{{{\mu _b}\sigma _a^2 + {\mu _a}\sigma _b^2}}{{{{({\mu _b} - {\mu _a})}^3}}}, \ \  \left.\frac{\partial^3 \theta(s, t)}{\partial t^2\partial s}\right|_{s=t=0} = \frac{3(\sigma^2_a\mu_b+\sigma^2_b\mu_a)(\sigma^2_a+\sigma^2_b)}{(\mu_b-\mu_a)^5}.
\end{align*}
Therefore,
\begin{align*}
E(V_\delta) &=  \left.\frac{\partial \phi(s,t)}{\partial t} \right|_{s=t=0}=\frac{\partial }{{\partial t}}{\left. {\exp \{ [2{\theta}(s,t)  - s] \delta \} } \right|_{s = 0,t = 0}}\\
&= \frac{2\delta }{{{\mu _b} - {\mu _a}}}.
\end{align*}
Similarly, we obtain
\begin{align*}
E({U_\delta} ) & = \frac{{\delta ({\mu _b} + {\mu _a})}}{{{\mu _b} - {\mu _a}}}\\
E(V^2_\delta) & = \frac{{{4\delta ^2}}}{{{{({\mu _b} - {\mu _a})}^2}}} + \frac{{2(\sigma _a^2 + \sigma _b^2)\delta }}{{{{({\mu _b} - {\mu _a})}^3}}}\\
E({U_\delta}^2) & = \frac{{{\delta ^2}(\mu _a + \mu _b)^2}}{{{{({\mu _b} - {\mu _a})}^2}}} + \frac{{2(\mu _b^2\sigma _a^2 + \mu _a^2\sigma _b^2)\delta }}{{{{({\mu _b} - {\mu _a})}^3}}}\\
E\left( {{U_\delta}V_\delta} \right) & =\frac{{{2\delta ^2}({\mu _b} + {\mu _a})}}{{{{({\mu _b} - {\mu _a})}^2}}} + \frac{{2({\mu _b}\sigma _a^2
+ {\mu _a}\sigma _b^2)\delta }}{{{{({\mu _b} - {\mu _a})}^3}}}.
\end{align*}
Finally, for $k,l\in\mathbb{N}\cup\{0\}$ and $k+l\ge 1$, \eqref{moments} follows by noting that
\begin{align*}
E(U^k_\delta V^l_\delta) = \left.\frac{\partial^{k+l}\phi(s,t)}{\partial s^k\partial t^l} \right|_{s=t=0} = \delta \left(\phi(s,t) \left.\frac{\partial^{k+l}}{\partial s^k\partial t^l} (2\theta(s,t)-s)\right)\right|_{s=t=0} + o(\delta),
\end{align*}
where $o(\delta)\to 0$ as $\delta\to0.$
\end{proof}

\begin{proof}[Proof of Theorem \ref{th:asy}]
From Brown and Solomon~\cite{bs75}, we have the following results for a renewal reward process generated by $\{(U_{\delta,n},V_{\delta,n}), n \ge 1\}$:
\[ E(Z_\delta(t)) = m t + \mathcal{O}(1),\]
where
\[ m = \frac{E(U_{\delta})}{E(V_{\delta})}.
\]
Using the results of Lemma \ref{thm2_moments} in the above equation, we get Equation \eqref{eq:m}. The same paper also states that
\[ \mbox{Var}(Z_\delta(t)) = s t  + \mathcal{O}(1),\]
where
\[ s = \frac{E(V_{\delta}^2)E(U_{\delta})^2}{E(V_{\delta})^3} - \frac{2E(U_{\delta}V_{\delta})E(U_{\delta})}{E(V_{\delta})^2} + \frac{E(U_{\delta}^2)}{E(V_{\delta})}, \]
Substituting the moments of $(U_{\delta},V_{\delta})$ from Lemma \ref{thm2_moments} into the above equation and simplifying, we get Equation \eqref{eq:s}.
\end{proof}

\begin{proof}[Proof of Theorem \ref{th:conv}]
Consider an arbitrary nonnegative sequence $\{\delta_m\}_{m\ge 1}$ such that $\delta_m \to 0$ as $m\to\infty.$ Define for $m, n\ge 1$,
\begin{align*}
\tilde U_{m,n} & = \sqrt{\delta_m}(U_{\delta_m, n} - E(U_{\delta_m,n})), \\
\tilde V_{m,n} & = \sqrt{\delta_m}(V_{\delta_m, n} - E(V_{\delta_m,n})).
\end{align*}
We note that for each $m$, $\{(\tilde U_{m,n}, \tilde V_{m,n}), {n\ge 2}\}$ is an i.i.d. sequence. Furthermore,
\[
\sum_{n=1}^{\lfloor \frac{t}{\delta^2_m}\rfloor} \Var(\tilde U_{m,n}) \to \frac{{2(\mu _b^2\sigma _a^2 + \mu _a^2\sigma _b^2)t }}{{{{({\mu _b} - {\mu _a})}^3}}}, \ \mbox{and} \ \sum_{n=1}^{\lfloor \frac{t}{\delta^2_m}\rfloor} \Var(\tilde V_{m,n}) \to \frac{{2(\sigma _a^2 + \sigma _b^2)t }}{{{{({\mu _b} - {\mu _a})}^3}}}, \ \mbox{as $m\to\infty.$}
\]
We claim that $\{(\tilde U_{m,n}, \tilde V_{m,n}), {m\ge 1, 1\le n \le \lfloor \frac{t}{\delta^2_m}\rfloor}\}$ satisfies Lindeberg condition, i.e., for any $\epsilon >0$,
\begin{align}\label{claim}
\sum_{n=1}^{\lfloor \frac{t}{\delta^2_m}\rfloor} E\left( \tilde U_{m,n}^2 1_{\{|\tilde U_{m,n}|\ge \epsilon\}}\right) \to 0, \ \mbox{and} \ \sum_{n=1}^{\lfloor \frac{t}{\delta^2_m}\rfloor} E\left( \tilde V_{m,n}^2 1_{\{|\tilde V_{m,n}|\ge \epsilon\}}\right) \to 0, \ \mbox{as $m\to\infty.$}
\end{align}
We will prove \eqref{claim} at the end of this proof. Thus from \cite[Theorem 18.2]{b99}, letting
\[
u_m(t) = \sum_{n=1}^{\lfloor \frac{t}{\delta^2_m}\rfloor} \tilde U_{m,n},
\ \mbox{and} \ v_m(t) = \sum_{n=1}^{\lfloor \frac{t}{\delta^2_m}\rfloor} \tilde V_{m,n},
\]
then
\[
(u_m, v_m) \Go W, \ \mbox{as $m\to\infty.$}
\]
where $W$ is a two dimensional Brownian motion with drift $0$ and covariance matrix
\[\frac{2}{({\mu _b} - {\mu _a})^3}\begin{pmatrix}
\mu _b^2\sigma _a^2 + \mu _a^2\sigma _b^2 & {\mu _b}\sigma _a^2 + {\mu _a}\sigma _b^2 \\
{\mu _b}\sigma _a^2 + {\mu _a}\sigma _b^2 & \sigma _a^2 + \sigma _b^2
\end{pmatrix}.
 \]
Next from \cite[Theorem 1]{iw71} and \cite[Corollary 3.33]{js03}, if
\begin{align*}
\tilde N_m(t) & = (E(V_{\delta_m,1}))^{3/2} \left(N_{\delta_m}\left({t}/{\delta_m}\right) - \frac{t}{\delta_m E(V_{\delta_m,1})}\right) \\
& = \left(\frac{2\delta_m}{\mu_b -\mu_a}\right)^{3/2} \left(N_{\delta_m}\left({t}/{\delta_m}\right) - \frac{(\mu_b-\mu_a)t}{2\delta_m^2}\right),
\end{align*}
then $(u_m, v_m, \tilde N_m) \Go (W_1, W_2, -W_2)$ as $m\to\infty$, where $W_1$ and $W_2$ are the first and second components of the Brownian motion $W$. Finally, we note that
\[
\clz_{\delta_m}(t) = \frac{1}{\sqrt{s}}\left[u_m\left(\delta_m^2N_{\delta_m}(t/\delta_m)\right) + \frac{\mu_b+\mu_a}{\mu_b-\mu_a} \left(\frac{\mu_b -\mu_a}{2}\right)^{3/2} \tilde N_m(t)\right].
\]
Furthermore, observing that
\[
\delta_m^2N_{\delta_m}(t/\delta_m) = \delta_m^2\left[\frac{\tilde N_m(t)}{(E(V_{\delta_m,1}))^{3/2}} + \frac{(\mu_b-\mu_a)t}{2\delta_m^2}\right] \to \frac{(\mu_b-\mu_a)t}{2}, \ \mbox{as $m\to\infty,$}
\]
we have that
\[
\clz_{\delta_m}(\cdot) \Go \frac{W_1(\frac{\mu_b-\mu_a}{2}\ \cdot) + \frac{\mu_b+\mu_a}{\mu_b-\mu_a} \left(\frac{\mu_b -\mu_a}{2}\right)^{3/2}
W_2(\cdot)}{\sqrt{s}},
\]
and it is easy to check that the weak limit on the right hand side is a standard Brownian motion. Consequently, $\clz_\delta$ converges weakly to a standard Brownian motion as $\delta\to 0$. At last, we give the proof of the claim given in \eqref{claim}. The proofs for $\tilde V_{m,n}$ and $\tilde U_{m,n}$ are similar, and we only consider $\tilde V_{m,n}$. We first note that from  Lemma \ref{thm2_moments},
\begin{align*}
E(V_{\delta_m,1}|A(0), B(0)) & = \frac{\ln A(0) - \ln B(0)}{\mu_b -\mu_a}, \\
\Var(V_{\delta_m,1}| A(0), B(0)) & = \frac{(\ln A(0) - \ln B(0))(\sigma^2_a + \sigma^2_b)}{(\mu_b - \mu_a)^3},
\end{align*}
and using conditional expectations, we have that for some $b_0\in (0,\infty)$,
\begin{align*}
\Var(V_{\delta_m,1})& = E(\Var(V_{\delta_m,1}|A(0), B(0))) + \Var(E(V_{\delta_m,1}|A(0), B(0))) \\
&  \le b_0 \left( E[\ln(A(0)/B(0))]+E[\ln^2(A(0)/B(0))] \right) < \infty.
\end{align*}
Next using Markov inequality, Holder's inequality and \eqref{moments}, we have for some $c_0\in (0,\infty),$
\begin{align*}
& \sum_{n=1}^{\lfloor \frac{t}{\delta^2_m}\rfloor} E\left( \tilde V_{m,n}^2 1_{\{|\tilde V_{m,n}|\ge \epsilon\}}\right)\\
& \le E(\tilde V_{m,1}^2) + \lceil \frac{t}{\delta^2_m}\rceil \sqrt{E(\tilde V_{m,2}^4) P(|\tilde V_{m,2}|\ge \epsilon)} \\
& \le \delta_m \Var(V_{\delta_m,1}) + \lceil \frac{t}{\delta^2_m}\rceil \sqrt{E(\tilde V_{m,2}^4) \epsilon^{-2} E(\tilde V_{m,2}^2)} \\
& \le \delta_m \Var(V_{\delta_m,1})+  \epsilon^{-1} \lceil \frac{t}{\delta^2_m}\rceil \delta_m^{3/2}\sqrt{E[(V_{\delta_m,2}-E(V_{\delta_m,2}))^4] \Var(V_{\delta_m,2})} \\
& \le \delta_m \Var(V_{\delta_m,1}) +\epsilon^{-1} \lceil \frac{t}{\delta^2_m}\rceil \delta_m^{3/2} \sqrt{c_0 \delta_m^2} \\
& \to 0, \ \mbox{as $m\to\infty$.}
\end{align*}
\end{proof}

\begin{proof}[Proof of Lemma \ref{estimation1}] For convenience, we omit the superscript $n$ for the estimators of $\mu_a, \mu_b, \sigma_a, \sigma_b$ and $\delta$.
Using the moments in Lemma \ref{thm2_moments}, we consider the following equations.
\begin{align}
x_1 &= \frac{2\hat\delta }{{{\hat\mu _b} - {\hat\mu _a}}}\label{moment-eqn-1}\\
x_2 &= \frac{{\hat\delta ({\hat\mu _b} + {\hat\mu _a})}}{{{\hat\mu _b} - {\hat\mu _a}}}\\
x_3 &= \frac{{{4\hat\delta ^2}}}{{{{({\hat\mu _b} - {\hat\mu _a})}^2}}} + \frac{{2(\hat\sigma _a^2 + \hat\sigma _b^2)\hat\delta }}{{{{({\hat\mu _b} - {\hat\mu _a})}^3}}}\\
x_4 &= \frac{{{\hat\delta ^2}(\hat\mu _a + \hat\mu _b)^2}}{{{{({\hat\mu _b} - {\hat\mu _a})}^2}}} + \frac{{2(\hat\mu _b^2\hat\sigma _a^2 + \hat\mu _a^2\hat\sigma _b^2)\hat\delta }}{{{{({\hat\mu _b} - {\hat\mu _a})}^3}}}\\
x_5 &= \frac{{{2\hat\delta ^2}({\hat\mu _b} + {\hat\mu _a})}}{{2{{({\hat\mu _b} - {\hat\mu _a})}^2}}}
+\frac{{2({\hat\mu _b}\hat\sigma _a^2 + {\hat\mu _a}\hat\sigma _b^2)\hat\delta }}{{{{({\hat\mu _b} -
{\hat\mu _a})}^3}}}. \label{moment-eqn-2}
\end{align}
Next we solve the above equations for $\hat\mu_a$, $\hat\mu_b$, $\hat\sigma_a$, $\hat\sigma_b$, $\hat\delta$ in terms of $x_k, k =1,2,\ldots, 5$. Let
\begin{eqnarray}
{y_1} &=& \frac{2x_2}{x_1} = {\hat\mu _b} + {\hat\mu _a} \nonumber\label{unique1}\\
{y_2} &=& \frac{x_3-x_1^2}{x_1} = \frac{{\hat\sigma _a^2 + \hat\sigma _b^2}}{{{{({\hat\mu _b} - {\hat\mu _a})}^2}}}\nonumber\\
{y_3} &=& \frac{x_4-x_2^2}{x_1} = \frac{{\hat\mu _b^2\hat\sigma _a^2 + \hat\mu _a^2\hat\sigma _b^2}}{{{{({\hat\mu _b} - {\hat\mu _a})}^2}}}\nonumber\\
{y_4} &=& \frac{x_5-x_1 x_2}{x_1} = \frac{{{\hat\mu _b}\hat\sigma _a^2 + {\hat\mu _a}\hat\sigma
_b^2}}{{{{({\hat\mu _b} - {\hat\mu _a})}^2}}}.\nonumber
\end{eqnarray}
We then note that
\begin{equation*}\label{unique2}
{y_1^2 - 4\frac{{{y_1}{y_4} - {y_3}}}{{{y_2}}}} = (\hat\mu_b - \hat\mu_a)^2.
\end{equation*}
Letting $\hat\mu_b > \hat\mu_a$, we obtain
\begin{eqnarray*}
{\hat\mu _a} &=& \frac{{{y_1} - {\sqrt{ {y_1^2 - 4\frac{{{y_1}{y_4} - {y_3}}}{{{y_2}}}}}}}}{2}\\
{\hat\mu _b} &=& \frac{{{y_1} + {\sqrt{{y_1^2 - 4\frac{{{y_1}{y_4} - {y_3}}}{{{y_2}}}}}}}}{2}
\end{eqnarray*}
and
\begin{eqnarray*}
{\hat\sigma _a} &=& \sqrt {({y_4} - {\hat\mu_a}{y_2})({\hat\mu_b} - {\hat\mu_a})}, \\
{\hat\sigma _b} &=& \sqrt {({\hat\mu_b}{y_2} - {y_4})({\hat\mu_b} - {\hat\mu_a})},\\
\hat\delta  &=& ({\hat\mu_b} - {\hat\mu_a}){x_1}.
\end{eqnarray*}
To see the above estimators are well-defined, we only need to show \eqref{well-define}.
We first note that
\begin{equation*}
y_1^2 - \frac{4({{y_1}{y_4} - {y_3}})}{{{y_2}}} = \frac{4}{{x_1^2({x_3} - x_1^2)}}\left[ {x_2^2({x_3} - x_1^2)
- 2{x_1}{x_2}({x_5} - {x_1}{x_2}) + x_1^2({x_4} - x_2^2)} \right].
\end{equation*}
It is clear that
\begin{eqnarray*}
x_1^2 &=& \left(\sum\limits_{i = 1}^n \frac{{v_i}}{n}\right)^2 >0,\\
{x_3} - x_1^2 &=& \frac{n \sum\limits_{i = 1}^n {{v_i^2}} - \sum\limits_{i = 1}^n {{v_i}}}{n^2} > 0.
\end{eqnarray*}
We next note that
\begin{align*}
&{x_2^2({x_3} - x_1^2) - 2{x_1}{x_2}({x_5} - {x_1}{x_2}) + x_1^2({x_4} - x_2^2)}\\
&\geq 2 x_1 x_2 \sqrt{({x_3} - x_1^2)({x_4} - x_2^2)} - 2{x_1}{x_2}({x_5} - {x_1}{x_2}) \\
&= 2 x_1 x_2 (\sqrt{({x_3} - x_1^2)({x_4} - x_2^2)} - ({x_5} - {x_1}{x_2})) \\
&= 2\frac{{\sum {{v_i}} }}{n}\frac{{\sum {{u_i}} }}{n}\left( {\sqrt {\left( {\frac{{\sum {v_i^2} }}{n} -
{{\left( {\frac{{\sum {{v_i}} }}{n}} \right)}^2}} \right)\left( {\frac{{\sum {u_i^2} }}{n} - {{\left(
{\frac{{\sum {{u_i}} }}{n}} \right)}^2}} \right)}  - \left( {\frac{{\sum {{v_i}{\Delta p_i}} }}{n} -
\frac{{\sum
{{v_i}} }}{n}\frac{{\sum {{u_i}} }}{n}} \right)} \right)\\
&= 2\frac{{\sum {{v_i}} }}{n}\frac{{\sum {{u_i}} }}{n}\left( {\sqrt {\frac{{\sum {{{\left( {{v_i} - \sum
{{v_i}} /n} \right)}^2}} }}{n}\frac{{\sum {{{\left( {{u_i} - \sum {{u_i}} /n} \right)}^2}} }}{n}}  - \left(
{\frac{{\sum {{v_i}{u_i}} }}{n} - \frac{{\sum {{v_i}} }}{n}\frac{{\sum {{u_i}} }}{n}} \right)} \right)\\
&\geq 2\frac{{\sum {{v_i}} }}{n}\frac{{\sum {{u_i}} }}{n}\left( {\frac{{\sum {\left( {{v_i} - \sum {{v_i}}
/n} \right)\left( {{u_i} - \sum {{u_i}} /n} \right)} }}{n} - \left( {\frac{{\sum {{v_i}{u_i}} }}{n} -
\frac{{\sum {{v_i}} }}{n}\frac{{\sum {{u_i}} }}{n}} \right)} \right)\\
&= 0.
\end{align*}
This shows the first inequality in \eqref{well-define}. To show the last two inequalities in \eqref{well-define}, we observe that
\begin{eqnarray*}
{y_4} - {\hat\mu_a}{y_2} &=& \frac{{{y_2}{{\sqrt{y_1^2 - \frac{4({{y_1}{y_4} - {y_3}})}{{{y_2}}}}}}}}{2} + \left( {{y_4} - \frac{{{y_1}{y_2}}}{2}} \right),\\
{\hat\mu_b}{y_2} - {y_4} &=& \frac{{{y_2}{{\sqrt{y_1^2 - \frac{4({{y_1}{y_4} - {y_3}})}{{{y_2}}}}}}}}{2} - \left( {{y_4} - \frac{{{y_1}{y_2}}}{2}} \right).
\end{eqnarray*}
Hence it suffices to show
\begin{equation*}
\frac{{y_2^2\left( {y_1^2 - \frac{4({{y_1}{y_4} - {y_3}})}{{{y_2}}}} \right)}}{4} \ge {\left( {{y_4} -
\frac{{{y_1}{y_2}}}{2}} \right)^2}.
\end{equation*}
After simplifying above inequality, it suffices to show that $y_2y_3 \ge y_4^2$. Note that
\begin{equation*}
y_2y_3 \ge y_4^2
\end{equation*}
is equivalent to
\[
(x_3-x_1^2)(x_4-x_2^2) \ge (x_5-x_1x_2)^2,
\]
and the latter one is proved above. This completes the proof of \eqref{well-define}. Next from the construction of the estimators, we see that they are the unique solutions of \eqref{moment-eqn-1} -- \eqref{moment-eqn-2}. Using the strong law of large numbers and the continuous mapping theorem, we have \eqref{consistent}. Finally, the central limit theorem for $\hat\Theta$ follows immediatly from Delta method (see \cite{cb01}) and the central limit theorem for $(x_1, x_2, \ldots, x_5)$, i.e., $$\sqrt{n}[(x_1, x_2, x_3, x_4, x_5) - E(x_1, x_2, x_3, x_4, x_5)]\Go \mathcal{N}_5(0,\Sigma),$$ where $\Sigma$ is the covariance matrix of $(V_\delta, U_\delta, V^2_\delta, U^2_\delta, U_\delta V_\delta).$
\end{proof}

\bibliographystyle{amsplain}
\bibliography{reference.bib}

\end{document}